\let\origvec\vec
\let\vec\origvec
	\pgfplotsset{compat=1.12}
	\crefname{mtheorem}{Theorem}{Theorems}
	\crefname{equation}{}{}
	\crefname{figure}{Fig.}{}
\newtheorem{theorem}{Theorem}
\newtheorem{lemma}[theorem]{Lemma}
\newtheorem{mtheorem}{Theorem}
\theoremstyle{definition}
  \newtheorem{definition}[theorem]{Definition}
  \newtheorem{mdefinition}[mtheorem]{Definition}
  \newtheorem{notation}[theorem]{Notation}
  \newtheorem{remark}[theorem]{Remark}
  \newtheorem{example}[theorem]{Example}
\newcommand{\R}{\mathbb{R}}
\newcommand{\C}{\mathbb{C}}
\newcommand{\Z}{\mathbb{Z}}
\newcommand{\T}{\mathbb{T}}
\newcommand{\cH}{\mathcal{H}}
\renewcommand{\Re}{\mathrm{Re}\,}
\renewcommand{\Im}{\mathrm{Im}\,}
\newcommand{\ind}{\mathrm{ind}\,}
\newcommand{\sgn}{\mathrm{sgn}\,}
\newcommand{\ess}{\sigma_{\mathrm{ess}}}
\newcommand{\wn}{\mathrm{wn}}
\newcommand{\textbi}[1]{\textit{\textbf{#1}}}
\begin{document}

\title{The Witten Index for One-dimensional Non-unitary Quantum Walks with Gapless Time-evolution}

\author{Keisuke Asahara \and Daiju Funakawa \and Motoki Seki \and Yohei Tanaka}

\institute{
Keisuke Asahara \at
The Center for Data Science Education and Research, Shiga University,
1-1-1 Banba hikone, Shiga 522-8522 Japan \\
\email{keisuke-asahara@biwako.shiga-u.ac.jp} 
\and 
Daiju Funakawa \at
Department of Electronics and Information Engineering, Hokkai-Gakuen University,
Sapporo 062-8605, Japan \\
\email{funakawa@hgu.jp} 
\and 
Motoki Seki \at
Department of Mathematics, Faculty of Science, Hokkaido University
Kita 10, Nishi 8, Kita-Ku, Sapporo, Hokkaido, 060-0810, Japan \\
\email{seki@math.sci.hokudai.ac.jp} 
\and 
Yohei Tanaka \at
Division of Mathematics and Physics, Faculty of Engineering, Shinshu University, Wakasato, Nagano 380-8553, Japan \\
\email{20hs602a@shinshu-u.ac.jp}
}
\maketitle

\begin{abstract}
Recent developments in the index theory of discrete-time quantum walks allow us to assign a certain well-defined supersymmetric index to a pair of a unitary time-evolution $U$ and a $\mathbb{Z}_2$-grading operator $\varGamma$ satisfying the chiral symmetry condition $U^* = \varGamma U \varGamma.$ In this paper, this index theory will be extended to encompass non-unitary $U$. The existing literature for unitary $U$ makes use of the indispensable assumption that $U$ is essentially gapped; that is, we require that the essential spectrum of $U$ contains neither $-1$ nor $+1$ to define the associated index. It turns out that this assumption is no longer necessary, if the given time-evolution $U$ is non-unitary. As a concrete example, we shall consider a well-known non-unitary quantum walk model on the one-dimensional integer lattice, introduced by Mochizuki-Kim-Obuse.
\end{abstract}
\keywords{Chiral symmetry, Non-unitary quantum walks, Supersymmetry, Witten index, Split-step quantum walks}

\section{Introduction}
\label{section: introduction}

The theory of (discrete-time) quantum walks has attracted enormous attention over the past few decades. Despite its apparent simplicity, vast applications of this ubiquitous notion can be found across multiple disciplines. For instance, the physical utility of quantum walks is especially confirmed for quantum algorithms \cite{Grover-1996,Ambainis-Bach-Nayak-Vishwanath-Watrous-2001}, photosynthesis \cite{Mohseni-Rebentrost-Lloyd-Aspuru-Guzik-2008,Peruzzo-Lobino-Matthews-Matsuda-Politi-Poulios-Zhou-Lahini-Ismail-Worhoff-Bromberg-Silberberg-Thompson-OBrien-2010}, and topological insulators \cite{Kitagawa-Rudner-Berg-Demler-2010,Obuse-Kawakami-2011,Kitagawa-2012,Asboth-Obuse-2013}. The long-time limit of the velocity distribution of the quantum walker, known as the the weak limit theorem \cite{Konno-2002,Grimmett-Janson-Scudo-2004,Suzuki-2016}, has been a particularly active theme of rigorous mathematical research on quantum walks in the early years of the 21\textsuperscript{st} century. Other mathematical studies have taken various points of view: localisation \cite{Inui-Konishi-Konno-2004,Konno-2010,Segawa-2011,Cantero-Grunbaum-Moral-Velazquez-2012,Fuda-Funakawa-Suzuki-2017,Fuda-Funakawa-Suzuki-2018}, quantum walks on graphs \cite{Aharonov-Ambainis-Kempe-Vazirani-2001,Ambainis-2003,Portugal-2016}, non-linear analysis \cite{Maeda-Sasaki-Segawa-Suzuki-Suzuki-2018a,Maeda-Sasaki-Segawa-Suzuki-Suzuki-2018b,Maeda-Sasaki-Segawa-Suzuki-Suzuki-2019}, unitary equivalence classes \cite{Ohno-2016,Ohno-2017,Ohno-2018,Kuriki-Nirjhor-Ohno-2020}, time operators \cite{Sambou-Tiedra-2019,Funakawa-Matsuzawa-Sasaki-Suzuki-Teranishi-2020}, and continuous limit \cite{Maeda-Suzuki-2019}.

The present article is a continuation of rigorous mathematical studies of index theory for \textbi{chirally symmetric quantum walks} from the perspective of supersymmetric quantum mechanics \cite{Cedzich-Grunbaum-Stahl-Velazquez-Werner-Werner-2016,Cedzich-Geib-Grunbaum-Stahl-Velazquez-Werner-Werner-2018,Cedzich-Geib-Stahl-Velazquez-Werner-Werner-2018,Suzuki-2019,Suzuki-Tanaka-2019,Matsuzawa-2020,Cedzich-Geib-Werner-Werner-2020}. Such a quantum walk can be naturally identified with a pair of a time-evolution operator $U : \cH \to \cH$ and a unitary self-adjoint operator $\varGamma: \cH \to \cH,$ satisfying the chiral symmetry condition;
\begin{equation}
\label{equation: chiral symmetry}
U^* = \varGamma U \varGamma,
\end{equation}
where $\varGamma$ gives a $\Z_2$-grading of the underlying state Hilbert space $\cH = \ker(\varGamma - 1) \oplus \ker(\varGamma + 1).$ The existing literature mentioned above allows us to assign a certain well-defined Fredholm index, denoted by $\ind(\varGamma,U),$ to each abstract chirally symmetric quantum walk $(\varGamma, U).$ Note that this assignation of the Fredholm index requires $U$ to be both \textbi{essentially unitary} (i.e. $U$ is a unitary element in the Calkin $C^*$-algebra) and \textbi{essentially gapped} (i.e. the essential spectrum of $U,$ denoted by $\ess(U),$ contains neither $-1$ nor $+1$).

The present article extends this index theory to encompass all those time-evolutions $U$ which fail to be essentially unitary. As a concrete example, we shall explicitly construct such a time-evolution $U$ with the property that it is essentially gapless, yet the associated index is well-defined. To put this into context, let us consider the following time-evolution operator on the state Hilbert space $\cH := \ell^2(\Z, \C^2)$ of square-summable $\C^2$-valued sequences;
\begin{equation}
\label{equation: definition of evolution operator of MKO}
U_{\textnormal{mko}} := SG \Phi C_2 S G^{-1} \Phi C_1,
\end{equation} 
where the operators $S, G, \Phi, C_1, C_2$ are defined respectively as the following block-operator matrices with respect to the orthogonal decomposition $\cH = \ell^2(\Z, \C) \oplus \ell^2(\Z, \C);$
\[
S := 
\begin{pmatrix}
L & 0 \\
0 & L^{-1}
\end{pmatrix}, \quad 
G := 
\begin{pmatrix}
e^{\gamma} & 0 \\\centering
0 & e^{-\gamma(\cdot + 1)}
\end{pmatrix}, \quad 
\Phi := 
\begin{pmatrix}
e^{i \phi} & 0 \\
0 & e^{-i\phi(\cdot + 1)}
\end{pmatrix}, \quad 
C_j := 
\begin{pmatrix}
\cos \theta_j & i \sin \theta_j \\
i \sin \theta_j & \cos \theta_j 
\end{pmatrix},
\]
where $L$ is the unitary bilateral left-shift operator defined by $L \Psi := \Psi(\cdot + 1)$ for each $\Psi \in \ell^2(\Z, \C),$ and where we assume that four $\R$-valued sequences $\gamma = (\gamma(x))_{x \in \Z}$, $\phi = (\phi(x))_{x \in \Z}, \theta_1 = (\theta_1(x))_{x \in \Z}, \theta_2 = (\theta_2(x))_{x \in \Z},$ all of which are identified with the corresponding multiplication operators on $\ell^2(\Z, \C),$ admit the following two-sided limits:
\begin{equation}
\label{equation: existence of limits}
\xi(\star) := \lim_{x \to \star} \xi(x) \in \R, \qquad \xi \in \{\gamma, \phi, \theta_1, \theta_2\}, \quad \star = \pm \infty.
\end{equation}

This model is a natural generalisation of the homogenous model considered in \cite[\textsection III.A]{Mochizuki-Kim-Obuse-2016} with the time-evolution \cref{equation: definition of evolution operator of MKO} being consistent with the experimental setup in \cite{Regensburger-Bersch-Miri-Onishchukov-Christodoulides-Peschel-2012} (see \cite[\textsection I-II]{Mochizuki-Kim-Obuse-2016} for details). Note that $U_{\textnormal{mko}}$ is non-unitary, unless $\gamma$ is identically zero. We shall explicitly construct a $\Z_2$-grading operator $\varGamma_{\textnormal{mko}} : \cH \to \cH$ in a highly non-trivial fashion, so that $(\varGamma_{\textnormal{mko}},U_{\textnormal{mko}})$ forms a chirally symmetric quantum walk. Complete classification of the two topological invariants $\ind(\varGamma_{\textnormal{mko}},U_{\textnormal{mko}})$ and $\ess(U_{\textnormal{mko}})$ can be found in this paper. In particular, we show that $\ess(U_{\textnormal{mko}})$ is a subset of the union of the unit circle $\T$ and 
the real line $\R,$ given explicitly by the following formula;
\[
\ess(U_{\textnormal{mko}}) = \sigma(-\infty) \cup \sigma(+\infty),
\]
where the sets $\sigma(\pm \infty) \subseteq \T \cup \R$ depend only on the two asymptopic values $\theta_1(\pm \infty), \theta_2(\pm \infty).$ As in \cref{figure: three cases}, it is shown in this paper  that for each $\star = \pm \infty,$ there exists a well-defined subinterval $[\gamma_-(\star),\gamma_+(\star)]$ of $[0,\infty],$ which enables us to classify $\sigma(\star)$ into $6$ different cases in total, depending on the sign $s(\star)$ of $-\sin \theta_1(\star) \sin \theta_2(\star).$

\begin{figure}[H]
\centering
\begin{tikzpicture}
\begin{axis}[ticks=none, xmin=-2, xmax=2, ymin= -2, ymax=2, legend pos = north west, axis lines=center, xlabel=$\Re$, ylabel=$\Im$, xlabel style={anchor = north}
, width = 0.4\textwidth, height = 0.4\textwidth, clip=false
]
	\addplot [domain=0:2*pi,samples=50, smooth, dashed]({cos(deg(x))},{sin(deg(x))}); 
	\addplot [domain= 6*pi/10: 9*pi/10, samples=50, smooth, ultra thick, lightgray]({cos(deg(x))},{sin(deg(x))}); 
	\addplot [domain= -6*pi/10: -9*pi/10, samples=50, smooth, ultra thick, lightgray]({cos(deg(x))},{sin(deg(x))}); 
	\addplot [domain= 6*pi/10: 9*pi/10, samples=50, smooth, ultra thick]({-cos(deg(x))},{sin(deg(x))}); 
	\addplot [domain= -6*pi/10: -9*pi/10, samples=50, smooth, ultra thick]({-cos(deg(x))},{sin(deg(x))}); 
    \node at (0,2.5) {\textbf{Case I}};
    \node at (0,-2.5) {$|\gamma(\star)| \leq  \gamma_-(\star)$};
\end{axis}
\end{tikzpicture}
\begin{tikzpicture}
\begin{axis}[ticks=none,xmin=-2, xmax=2, ymin=-2, ymax=2, legend pos = north west, axis lines=center, xlabel=$\Re$, ylabel=$\Im$, xlabel style={anchor = north}
, width = 0.4\textwidth, height = 0.4\textwidth, clip=false
]
	\addplot [domain=0:2*pi,samples=50, smooth, dashed]({cos(deg(x))},{sin(deg(x))}); 
	\addplot [domain= 8*pi/10: pi, samples=50, smooth, ultra thick, lightgray]({cos(deg(x))},{sin(deg(x))}); 
	\addplot [domain= -8*pi/10: -pi, samples=50, smooth, ultra thick, lightgray]({cos(deg(x))},{sin(deg(x))}); 
	\addplot [domain= 0: 1, samples=50, smooth, ultra thick, lightgray]({-1.5*x - (1-x)/(1.5)},{0}); 
	\addplot [domain= 8*pi/10: pi, samples=50, smooth, ultra thick]({-cos(deg(x))},{sin(deg(x))}); 
	\addplot [domain= -8*pi/10: -pi, samples=50, smooth, ultra thick]({-cos(deg(x))},{sin(deg(x))}); 
	\addplot [domain= 0: 1, samples=50, smooth, ultra thick]({1.5*x + (1-x)/(1.5)},{0}); 
    \node at (0,2.5) {\textbf{Case II}};
    \node at (0,-2.5) {$\gamma_-(\star) < |\gamma(\star)| < \gamma_+(\star)$};
\end{axis}
\end{tikzpicture}
\begin{tikzpicture}
\begin{axis}[ticks=none,xmin=-2, xmax=2, ymin=-2, ymax=2, legend pos = north west, axis lines=center, xlabel=$\Re$, ylabel=$\Im$, xlabel style={anchor = north}
, width = 0.4\textwidth, height = 0.4\textwidth, clip=false
]
	\addplot [domain=0:2*pi, samples=50, smooth, dashed]({cos(deg(x))},{sin(deg(x))}); 
	\addplot [domain= 0: 1, samples=50, smooth, ultra thick, lightgray]({-1.8*x - (1.2)*(1-x)},{0}); 
	\addplot [domain= 0: 1, samples=50, smooth, ultra thick, lightgray]({-x/(1.2) - (1-x)/(1.8)},{0}); 
	\addplot [domain= 0: 1, samples=50, smooth, ultra thick]({1.8*x + (1.2)*(1-x)},{0}); 
	\addplot [domain= 0: 1, samples=50, smooth, ultra thick]({x/(1.2) + (1-x)/(1.8)},{0}); 
    \node at (0,2.5) {\textbf{Case III}};
    \node at (0,-2.5) {$\gamma_+(\star) \leq |\gamma(\star)|$};
\end{axis}
\end{tikzpicture}
\caption{
The set $\sigma(\star) \subseteq \T \cup \R$ is classified into Cases I, II, III as above according to the size of $|\gamma(\star)|.$ If $s(\star)  = 1$ (resp. if $s(\star)  = -1$), then the black regions (resp. gray regions) in each of the above three cases depict the subset $\sigma(\star).$ Therefore, there are $6$ distinct cases in total. In particular, $\sigma(\star)$ is a connected subset of $\T \cup \R$ containing either $-1$ or $+1,$ and so Case II is of significant importance.}
\label{figure: three cases}
\end{figure}
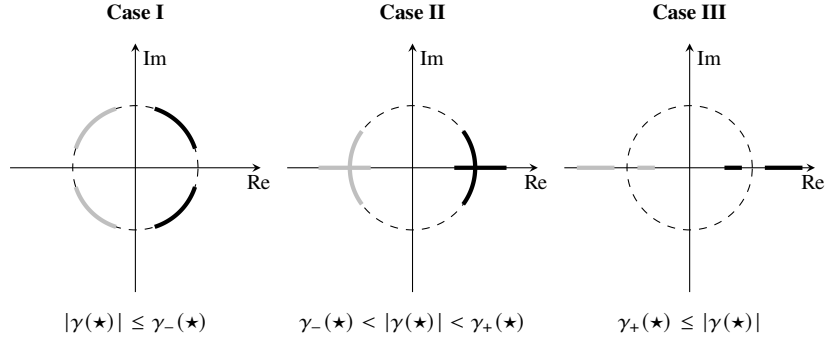
The present article is organised as follows. \cref{section: preliminaries} is a preliminary. In \cref{section: chirally symmetric quantum walks} we state the main theorems of this paper, proofs of which will be completely deferred to \cref{section: proof of main theorem}. In particular, it is shown in \cref{theorem: MKO model} that the $2$-step chirally symmetric quantum walk $(\varGamma_{\textnormal{mko}}, U_{\textnormal{mko}})$ can be naturally generalised to another $m$-step chirally symmetric quantum walk, denoted by $(\varGamma_m, U_m)$ in this paper, where $m$ can be any fixed non-zero integer. This new model also unifies the one-dimesional unitary quantum walks in \cite{Ambainis-Bach-Nayak-Vishwanath-Watrous-2001,Konno-2002,Suzuki-2016,Kitagawa-Rudner-Berg-Demler-2010,Kitagawa-Broome-Fedrizzi-Rudner-Berg-Kassal-Aspuru-Demler-White-2012,Kitagawa_2012,Fuda-Funakawa-Suzuki-2017,Fuda-Funakawa-Suzuki-2018,Fuda-Funakawa-Suzuki-2019,Suzuki-Tanaka-2019,Matsuzawa-2020,Tanaka-2020}. Complete classification of the two topological invariants $\ind(\varGamma_m, U_m)$ and $\ess(U_m)$ can be collectively found in \cref{maintheorem: generalised mko}. With the aid of \crefrange{theorem: MKO model}{maintheorem: generalised mko}, we show in \cref{section: gapless time-evolution} that the non-unitary time-evolution $U_{\textnormal{mko}}$ can have a well-defined index, yet it is essentially gapless (see \cref{example: second example} for details). This construction is based upon Case II in \cref{figure: three cases}. In \cref{section: proof of main theorem}, we prove the main theorems in \cref{section: main theorem and discussion}. In particular, we shall make use of an abstract form of the one-dimensional bulk-boundary correspondence to fully classify $\ind(\varGamma_m, U_m).$ The paper concludes with the summary and discussion in \cref{section: concluding remarks}.

\section{Main theorems and discussion}
\label{section: main theorem and discussion}

Proofs of the main theorems of the current section can be collectively found in \cref{section: proof of main theorem}.

\subsection{Preliminaries}
\label{section: preliminaries}

By operators we shall always mean everywhere-defined bounded operators between Hilbert spaces throughout this paper. An operator $X$ is said to be \textbi{Fredholm}, if $\ker X, \ker X^*$ are finite-dimensional and if $X$ has a closed range. If $X$ is Fredholm, then the \textbi{Fredholm index} of $X$ is defined by
\[
\ind X := \dim \ker X - \dim \ker X^*.
\]
If the domain and range of $X$ are identical, then the (Fredholm) \textbi{essential spectrum} of $X$ is defined by $\ess(X) := \{z \in \C \mid X - z \mbox{ is not Fredholm}\}.$ In particular, we call $X$ \textbi{essentially gapped}, if $-1,+1 \notin \ess(X)$ following  \cite{Cedzich-Geib-Grunbaum-Stahl-Velazquez-Werner-Werner-2018,Cedzich-Geib-Stahl-Velazquez-Werner-Werner-2018}. 

A \textbi{chiral pair} on $\cH$ is any pair $(\varGamma,U)$ of a unitary self-adjoint operator $\varGamma$ on $\cH$ and an operator $U$ on $\cH$ satisfying the chiral symmetry condition \cref{equation: chiral symmetry}. Note that $\varGamma$ gives a $\Z_2$-grading of the underlying Hilbert space $\cH = \ker(\varGamma - 1) \oplus \ker(\varGamma + 1),$ and that $\varGamma = 1 \oplus (-1)$ with respect to this orthogonal decomposition, where $1$ denotes the identity operator on a Hilbert space throughout this paper. The operator $U$ can be written as $U = R + iQ,$ where $R, Q$ are the real and imaginary parts of $U$ respectively. More precisely, $R,Q$ admit the following block-operator matrix representations: 
\begin{align}
\label{equation: representation of R and Q}
R =  
\begin{pmatrix}
R_1 & 0 \\
0 & R_2
\end{pmatrix}_{\ker(\varGamma - 1) \oplus \ker(\varGamma + 1)}, \qquad 
Q = \begin{pmatrix}
0 & Q_0^* \\
Q_0 & 0
\end{pmatrix}_{\ker(\varGamma - 1) \oplus \ker(\varGamma + 1)},
\end{align}
where the first equality (resp. second equality) follows from $[\varGamma, R] := \varGamma R - R \varGamma = 0$ (resp. from $\{\varGamma, Q\} := \varGamma Q + Q \varGamma = 0$).  

\begin{definition}
Let $(\varGamma, U)$ be a chiral pair on a Hilbert space $\cH,$ and let $Q$ be the imaginary part of $U$ given by the second equality in \cref{equation: representation of R and Q}. Then the chiral pair $(\varGamma, U)$ is said to be \textbi{Fredholm}, if $0 \notin \ess(Q)$ (or, equivalently, $Q_0$ is Fredholm). In this case, the \textbi{Witten index} of the Fredholm chiral pair $(\varGamma, U)$ is defined by 
$
\ind(\varGamma, U) := \ind Q_0.
$
\end{definition}

We shall make use of the following \textbi{unitary invariance} property of the Witten index throughout this paper;

\begin{lemma}[unitary invariance]
\label{lemma: unitary invariance of the witten index}
Let $(\varGamma,U), (\varGamma',U')$ be two chiral pairs on two Hilbert spaces $\cH,\cH'$ respectively. If  $(\varGamma,U), (\varGamma',U')$ are \textbi{unitarily equivalent} in the sense that $(\varGamma',U') = (\epsilon^* \varGamma \epsilon, \epsilon^* U \epsilon)$ for some unitary operator $\epsilon : \cH' \to \cH,$ then $(\varGamma, U)$ is Fredholm if and only if so is $(\varGamma', U').$ In this case, we have $\ind(\varGamma, U) = \ind(\varGamma', U').$
\end{lemma}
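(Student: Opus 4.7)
The plan is to reduce the statement to the elementary fact that Fredholmness and the Fredholm index are preserved under composition with unitary operators on either side. Everything will follow once I establish that the off-diagonal block $Q_0'$ of the imaginary part of $U'$ is a two-sided unitary conjugate of the corresponding block $Q_0$ associated with $U$.

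First, I would take imaginary parts. Since $U' = \epsilon^* U \epsilon$, taking adjoints gives $(U')^* = \epsilon^* U^* \epsilon$, and therefore
\[
Q' \;=\; \frac{U' - (U')^*}{2i} \;=\; \epsilon^* \, \frac{U - U^*}{2i}\, \epsilon \;=\; \epsilon^* Q \epsilon.
\]
Next, because $\varGamma' = \epsilon^* \varGamma \epsilon$, the unitary $\epsilon : \cH' \to \cH$ carries the $\Z_2$-grading on $\cH'$ onto the one on $\cH$. More precisely, $\epsilon\bigl(\ker(\varGamma' \mp 1)\bigr) = \ker(\varGamma \mp 1)$, so the restrictions
\[
\epsilon_\pm \;:\; \ker(\varGamma' \mp 1) \;\longrightarrow\; \ker(\varGamma \mp 1)
\]
are well-defined unitary operators, and the block-diagonal decomposition $\epsilon = \epsilon_+ \oplus \epsilon_-$ holds with respect to the two orthogonal sum decompositions of $\cH'$ and $\cH$.

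Substituting this block form into $Q' = \epsilon^* Q \epsilon$ and using the off-diagonal shape of $Q$ from \cref{equation: representation of R and Q}, a routine $2 \times 2$ block multiplication gives
\[
Q' \;=\;
\begin{pmatrix} \epsilon_+^* & 0 \\ 0 & \epsilon_-^* \end{pmatrix}
\begin{pmatrix} 0 & Q_0^* \\ Q_0 & 0 \end{pmatrix}
\begin{pmatrix} \epsilon_+ & 0 \\ 0 & \epsilon_- \end{pmatrix}
\;=\;
\begin{pmatrix} 0 & (\epsilon_-^* Q_0 \epsilon_+)^* \\ \epsilon_-^* Q_0 \epsilon_+ & 0 \end{pmatrix},
\]
so by comparing with the analogous decomposition of $Q'$ one reads off $Q_0' = \epsilon_-^* Q_0 \epsilon_+$. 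Since $\epsilon_\pm$ are unitary, $Q_0'$ is Fredholm if and only if $Q_0$ is Fredholm; in that case $\ker Q_0' = \epsilon_+^{-1} \ker Q_0$ and $\ker (Q_0')^* = \epsilon_-^{-1} \ker Q_0^*$, both isomorphisms of finite-dimensional spaces, so the dimensions match and $\ind Q_0' = \ind Q_0$. By the definition of the Witten index this is exactly $\ind(\varGamma', U') = \ind(\varGamma, U)$.

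No genuine obstacle is expected; the only point requiring a little care is the bookkeeping of which restriction $\epsilon_\pm$ sits where in the conjugation formula, and remembering that $Q_0$ maps $\ker(\varGamma - 1)$ into $\ker(\varGamma + 1)$ (so that $\epsilon_-$ appears on the left and $\epsilon_+$ on the right), but this is purely a matter of matching indices in the block decomposition.
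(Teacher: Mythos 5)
Your proposal is correct and follows essentially the same route as the paper: both arguments hinge on showing that $\epsilon$ is block-diagonal with respect to the two gradings (you via the eigenspace identity $\epsilon(\ker(\varGamma'\mp 1))=\ker(\varGamma\mp 1)$, the paper via the explicit block computation of $\epsilon\varGamma'-\varGamma\epsilon=0$, which is the same fact), and then read off $Q_0'=\epsilon_-^*Q_0\epsilon_+$ to conclude by unitary invariance of the Fredholm index.
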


\subsection{Main theorems}
\label{section: chirally symmetric quantum walks}

We are now in a position to introduce the main model of the present article;

\begin{mdefinition}
\label{definition: Um}
Let $m$ be a fixed non-zero integer, and let $(\varGamma_m, U_m)$ be the pair of the following block-operator matrices with respect to $\ell^2(\Z, \C^2) = \ell^2(\Z) \oplus \ell^2(\Z):$
\begin{align}
\tag{A1}
\label{equation: definition of VarGamma m}
\varGamma_m &:=
\begin{pmatrix}
p & qL^{m} \\
L^{-m}q^*  & -p(\cdot - m)
\end{pmatrix}, \\
\tag{A2}
\label{equation: definition of Um}
U_m &:=
 \begin{pmatrix}
p & qL^{m} \\
L^{-m}q^*  & -p(\cdot - m)
\end{pmatrix}
\begin{pmatrix}
e^{-2 \gamma(\cdot + 1)}a & e^{\gamma - \gamma(\cdot + 1)}b^* \\
e^{\gamma - \gamma(\cdot + 1)}b & -e^{2 \gamma}a
\end{pmatrix}, 
\end{align}
where we assume that three convergent  $\R$-valued sequences $\gamma = (\gamma(x))_{x \in \Z}, p = (p(x))_{x \in \Z} , a = (a(x))_{x \in \Z}$ and two convergent $\C$-valued sequences $q =(q(x))_{x \in \Z}, b = (b(x))_{x \in \Z}$ satisfy the following conditions:
\begin{align}
\tag{A3}
\label{equation: condition of p and q}
&p(x)^2 + |q(x)|^2 = 1, & &x \in \Z, \\
\tag{A4}
\label{equation: condition of a and b}
&a(x)^2 + |b(x)|^2 = 1, & &x \in \Z,  \\
\tag{A5}
\label{equation: limits of sequences}
&\xi(\pm \infty) := \lim_{x \to \pm \infty} \xi(x), & & \xi \in \{\gamma, p, a, q, b\}, \\
\tag{A6}
\label{equation: limits of theta and theta prime}
&
\theta(\pm \infty) := 
\begin{cases}
\arg q(\pm \infty), & q(\pm \infty) \neq 0, \\
0,             & q(\pm \infty) = 0, 
\end{cases}
&
&\theta'(\pm \infty) := 
\begin{cases}
\arg b(\pm \infty), & b(\pm \infty) \neq 0, \\
0,             & b(\pm \infty) = 0, 
\end{cases}
\end{align}
where $\arg w$ of a non-zero complex number $w$ is uniquely defined by $w = e^{i \arg w}$ and $\arg w \in [0,2\pi).$
\end{mdefinition}

The pair $(\varGamma_m, U_m)$ introduced in \cref{definition: Um}, where $\varGamma_m$ is unitary self-adjoint by \cref{equation: condition of p and q}, turns out to be a chiral pair. Indeed, $U_m$ can be uniquely written as $U_m = \varGamma_m C,$ where $C$ is self-adjoint, and so
\[
U_m^* = (\varGamma_m C)^* = C^* \varGamma_m^*  = C \varGamma_m  = \varGamma_m^2 C \varGamma_m = \varGamma_m U_m \varGamma_m,
\]
where the second last equality follows from $\varGamma_m^2 = 1.$ The chiral pair $(\varGamma_m, U_m)$  unifies all of the following existing models on the one-dimensional integer lattice $\Z:$ 
\begin{itemize}
\item If $m = 1$ and if $\gamma$ is identically $0,$ then $U_1$ is the unitary evolution of a \textbi{split-step quantum walk model} considered in \cite{Kitagawa-Rudner-Berg-Demler-2010,Kitagawa-Broome-Fedrizzi-Rudner-Berg-Kassal-Aspuru-Demler-White-2012,Kitagawa_2012,Fuda-Funakawa-Suzuki-2017,Fuda-Funakawa-Suzuki-2018,Fuda-Funakawa-Suzuki-2019,Suzuki-Tanaka-2019,Matsuzawa-2020,Tanaka-2020}. In particular, if we set $p = 0,$ then this model becomes the usual \textbi{one-dimensional quantum walk model} considered in \cite{Ambainis-Bach-Nayak-Vishwanath-Watrous-2001,Konno-2002,Suzuki-2016}. 

\item If $m = 2,$ then $U_2$ turns out to be equivalent to the non-unitary evolution operator $U_{\textrm{mko}}$ given by \cref{equation: definition of evolution operator of MKO} in sense of the following theorem.
\end{itemize}

\begin{mtheorem}
\label{theorem: MKO model}
Let $U_{\textnormal{mko}}$ be given by \cref{equation: definition of evolution operator of MKO}, where we assume that four convergent $\R$-valued sequences $\gamma, \phi, \theta_1, \theta_2$ admit the two-sided limits of the form \cref{equation: existence of limits}. Then there exists a unitary self-adjoint operator $\varGamma_{\textnormal{mko}}$ on $\ell^2(\Z, \C^2),$ such that $(\varGamma_{\textnormal{mko}}, U_{\textnormal{mko}})$ forms a chiral pair. Moreover, the chiral pair $(\varGamma_{\textnormal{mko}}, U_{\textnormal{mko}})$ is unitarily equivalent to the chiral pair $(\varGamma_2, U_2),$ where the sequences $p,q,a,b$ are defined respectively by 
\begin{equation}
\tag{B1} 
\label{equation: mko substitution}
p := - \sin \theta_1(\cdot + 1), \quad q := - i\cos \theta_1(\cdot + 1), \quad a :=  \sin \theta_2, \quad b := i \cos \theta_2 e^{i (\phi + \phi(\cdot + 1))}.
\end{equation}
\end{mtheorem}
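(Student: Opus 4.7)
The plan is to first observe that the pair $(\varGamma_m, U_m)$ is a chiral pair for every non-zero integer $m$, and then reduce the MKO theorem to an explicit construction of a unitary intertwiner. The chiral pair property follows from the factorisation $U_m = \varGamma_m D_m$, where $D_m$ denotes the second block-operator matrix on the right of \cref{equation: definition of Um}. Using \cref{equation: condition of p and q} and the commutation rule $L^m f = f(\cdot + m) L^m$ for multiplication operators, one verifies directly that $\varGamma_m^2 = 1$ and $\varGamma_m^* = \varGamma_m$, so that $\varGamma_m$ is unitary self-adjoint. Likewise $D_m$ is self-adjoint: its diagonal entries are real multiplication operators, while its off-diagonal entries are mutually adjoint because the real weight $e^{\gamma - \gamma(\cdot + 1)}$ commutes with the multiplication operator $b$. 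Hence $\varGamma_m U_m = \varGamma_m^2 D_m = D_m$ is self-adjoint, which is equivalent to \cref{equation: chiral symmetry}, and in particular $(\varGamma_2, U_2)$ is a chiral pair.

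To establish the unitary equivalence, I will construct a unitary $\epsilon : \ell^2(\Z, \C^2) \to \ell^2(\Z, \C^2)$ such that $\epsilon^* U_{\textnormal{mko}} \epsilon = U_2$ under the substitutions \cref{equation: mko substitution}, and then set $\varGamma_{\textnormal{mko}} := \epsilon \varGamma_2 \epsilon^*$. Since $\varGamma_2$ is unitary self-adjoint and $\epsilon$ is unitary, so is $\varGamma_{\textnormal{mko}}$, and \cref{lemma: unitary invariance of the witten index} immediately yields that $(\varGamma_{\textnormal{mko}}, U_{\textnormal{mko}})$ is a chiral pair unitarily equivalent to $(\varGamma_2, U_2)$. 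The entire theorem therefore reduces to the explicit identification of $\epsilon$ and a direct verification that the conjugated operator coincides with $\varGamma_2 D_2$ entry by entry.

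The form of the substitution \cref{equation: mko substitution} dictates the structure of $\epsilon$. The shift of $\theta_1$ by one site in $p, q$ indicates that the rightmost coin $C_1$ should be translated by a single lattice step; the phase $e^{i(\phi + \phi(\cdot+1))}$ in $b$ indicates that the two copies of $\Phi$ in $U_{\textnormal{mko}}$ must combine through the intermediate shift $S$; and the exponential weights $e^{\pm 2\gamma(\cdot+1)}$ and $e^{\gamma - \gamma(\cdot+1)}$ in $D_2$ indicate how the factors $G^{\pm 1}$ merge into a single self-adjoint gauge. Accordingly $\epsilon$ is naturally built as the product of a unilateral shift and a diagonal phase multiplication, chosen so as to absorb the asymmetry between the blocks $SG\Phi$ and $SG^{-1}\Phi$ in \cref{equation: definition of evolution operator of MKO}. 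I expect the main obstacle to be precisely this construction and its verification: one must track how $L$ commutes through the multiplication operators $\gamma, \phi, \theta_1, \theta_2$ across four layers of the product, and confirm that each of the four entries of $\epsilon^* U_{\textnormal{mko}} \epsilon$ collapses to the corresponding entry of $\varGamma_2 D_2$ after the substitution \cref{equation: mko substitution}.
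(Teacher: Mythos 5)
Your first paragraph is fine and agrees with the paper's own (separately stated) verification that $(\varGamma_m,U_m)$ is a chiral pair via $U_m=\varGamma_m D_m$ with $\varGamma_m$ unitary self-adjoint and $D_m$ self-adjoint. Your overall strategy for the second part is also the paper's: find a unitary $\epsilon$ with $\epsilon^* U_{\textnormal{mko}}\epsilon = U_2$ and set $\varGamma_{\textnormal{mko}}:=\epsilon\varGamma_2\epsilon^*$, after which \cref{lemma: unitary invariance of the witten index} does the rest. The problem is that you stop exactly where the proof begins: the entire content of \cref{theorem: MKO model} is the explicit identification of $\epsilon$ and the verification, and you defer both, so what you have written is a plan rather than a proof.

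Moreover, your structural guess for $\epsilon$ --- ``the product of a unilateral shift and a diagonal phase multiplication'' --- would not work. The conjugating unitary in the paper is $\eta=(\sigma_2 C_1)(S\sigma_2)$, a product of two unitary \emph{involutions}: the antidiagonal shift $S\sigma_2=\bigl(\begin{smallmatrix}0 & -iL\\ iL^{-1}&0\end{smallmatrix}\bigr)$ and the pointwise self-adjoint unitary $\sigma_2 C_1$, which contains the coin $C_1$ itself. No diagonal phase can produce the $\sin\theta_1$, $\cos\theta_1$ entries of $\varGamma_2$ under \cref{equation: mko substitution}; those come precisely from the identity $(S\sigma_2)(\sigma_2 C_1)(S\sigma_2)=\varGamma_2$, which is the key computation you are missing. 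The other missing computation is the normal form of the middle factor: writing $U_{\textnormal{mko}}C_1^{-1}=S(G\Phi)C_2 S(G^{-1}\Phi)$ and commuting the inner $S$ through $G^{-1}\Phi$ (via $L^{\pm1}f=f(\cdot\pm1)L^{\pm1}$) yields $U_{\textnormal{mko}}=(S\sigma_2)(\sigma_2 M)(S\sigma_2)(\sigma_2 C_1)$ with $\sigma_2 M=D_2$ under \cref{equation: mko substitution}; conjugation by $\eta$ then collapses everything to $\varGamma_2 D_2=U_2$ because both factors of $\eta$ square to the identity. Until you exhibit $\eta$ and carry out these two computations, the theorem is not proved.
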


Complete classification of the two topological invariants $\ind(\varGamma_m, U_m)$ and $\ess(U_m)$ can be collectively found in the following theorem;
\begin{mtheorem}
\label{maintheorem: generalised mko}
If $(\varGamma_m, U_m)$ is the chiral pair in \cref{definition: Um}, then we have the following two assertions:
\begin{enumerate}
\item \textnormal{\textbf{Classification of the Witten index.}} For each $\star = \pm \infty,$ we let
\begin{equation}
\tag{C1}
\label{equation: definition of pgamma}
p_{\gamma}(\star) := \frac{p(\star)}{\sqrt{p(\star)^2 + |q(\star)|^2\cosh^2(2\gamma(\star))}}.
\end{equation}
Then the chiral pair $(\varGamma_m, U_{m})$ is Fredholm if and only if $|p_{\gamma}(\star)| \neq |a(\star)|$ for each $\star =  \pm \infty.$ In this case, we have the following index formula;
\begin{equation}
\tag{C2}
\label{equation: Witten index formula}
\frac{\ind (\varGamma_m, U_{m})}{m} \\
= 
\begin{cases}
0, & |p_{\gamma}(-\infty)| < |a(-\infty)|, \, |p_{\gamma}(+\infty)| < |a(+\infty)|, \\
\sgn p(+\infty), &  |p_{\gamma}(-\infty)| < |a(-\infty)|, \, |p_{\gamma}(+\infty)| > |a(+\infty)|, \\
- \sgn p(-\infty), & |p_{\gamma}(-\infty)| > |a(-\infty)|, \, |p_{\gamma}(+\infty)| < |a(+\infty)| , \\
\sgn p(+\infty) - \sgn p(-\infty), & |p_{\gamma}(-\infty)| > |a(-\infty)|, \,  |p_{\gamma}(+\infty)| > |a(+\infty)|,
\end{cases}
\end{equation}
where the sign function $\sgn : \R \to \{-1,1\}$ is defined by
\begin{equation}
\tag{C3}
\label{equation: definition of sign function}
\sgn x :=
\begin{cases}
\dfrac{x}{|x|}, & x \neq 0, \\
1, & x = 0.
\end{cases}
\end{equation}

\item \textnormal{\textbf{Classification of the essential spectrum.}}
For each $\star = \pm \infty,$ we let
\begin{align}
\label{equation: definition of s star}
\tag{C4}
s(\star) &:= \sgn(p(\star)a(\star)), \\
\label{equation: definition of Lambdapm}
\tag{C5}
\Lambda_\pm(\star) &:= |p(\star) a(\star)| \cosh(2\gamma(\star))  \pm |q(\star) b(\star)|, \\
\label{equation: definition of sigma star}
\tag{C6}
\sigma(\star) &:= \bigcup_{n \in \{ -1, +1\}} \left\{ \left( x + \sqrt{x^2 - 1} \right)^n \mathrel{} \middle| \mathrel{} s(\star)x \in [\Lambda_-(\star), \Lambda_+(\star)] \right\}.
\end{align}
Then the essential spectrum of $U_{m}$ can be written as $\ess(U_{m}) = \sigma(-\infty) \cup \sigma(+\infty).$ Furthermore, for each $\star = \pm \infty$ there exists a well-defined closed interval $[\gamma_-(\star), \gamma_+(\star)] \subseteq [0, \infty],$ such that the set $\sigma(\star)$ admits the following further classification:
\begin{itemize}
\item \textnormal{\textbf{Case I. }} If $|\gamma(\star)| \leq \gamma_-(\star),$ then $[\Lambda_-(\star),\Lambda_+(\star)] \subseteq  [-1,1],$ and so $\sigma(\star) \subseteq \T.$
\item \textnormal{\textbf{Case II. }} If $\gamma_-(\star) < |\gamma(\star)| < \gamma_+(\star),$ then $[\Lambda_-(\star), 1] \subseteq [-1,1]$ and $[1, \Lambda_+(\star)] \subseteq [1,\infty],$ and so $\sigma(\star)$ is a connected subset of $\T\cup \R$ containing $s(\star).$
\item \textnormal{\textbf{Case III. }} If $\gamma_+(\star) \leq |\gamma(\star)|,$ then $[\Lambda_-(\star),\Lambda_+(\star)] \subseteq  [1,\infty),$ and so $\sigma(\star) \subseteq \R.$
\end{itemize}
More explicitly, for each $\star = \pm \infty$ the closed interval $[\gamma_-(\star), \gamma_+(\star)]$ is given by
\begin{equation}
\tag{C7}
\label{equation: definition of gammaj}
\gamma_\pm(\star) 
:= \frac{1}{2} \cosh^{-1} \left( \frac{1 \pm |q(\star)b(\star)|}{|p(\star)a(\star)|}\right),
\end{equation}
where $\cosh^{-1}$ denotes the inverse function of $[0,\infty] \ni x \longmapsto \cosh x \in [1,\infty]$ with $1/0 := \infty$ by convention. 
\end{enumerate}
\end{mtheorem}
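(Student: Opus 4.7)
The plan is to handle the two assertions separately, since both the essential spectrum and the Witten index are governed by the asymptotic behaviour of the five coefficient sequences at $\pm\infty$.

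For the essential spectrum, I would first reduce to constant-coefficient operators. Let $U_m^\star$ denote the operator obtained from $U_m$ by replacing each sequence $\xi \in \{\gamma, p, a, q, b\}$ with its asymptotic value $\xi(\star)$. Since all five sequences converge at both ends, the difference $U_m - (U_m^{-\infty} \oplus U_m^{+\infty})$, after decomposing $\ell^2(\Z, \C^2)$ along the two half-lines of $\Z$, is a norm limit of finite-rank operators and hence compact. Weyl's theorem then gives
\[
\ess(U_m) = \sigma(U_m^{-\infty}) \cup \sigma(U_m^{+\infty}).
\]
Each $U_m^\star$ is translation-invariant, and the Fourier transform $L^m \mapsto e^{-imk}$ diagonalises it as a direct integral of $2\times 2$ symbol matrices. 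A short computation shows that the symbol has determinant $\det \varGamma_m^\star \cdot \det M^\star = (-1)(-1) = 1$ and trace $2\bigl(p(\star) a(\star) \cosh(2\gamma(\star)) + \Re(q(\star) b(\star) e^{-imk})\bigr)$, so its eigenvalues are $x \pm \sqrt{x^2 - 1}$ with $x$ ranging over $s(\star)[\Lambda_-(\star), \Lambda_+(\star)]$ as $mk$ sweeps $\T$. This reproduces \cref{equation: definition of sigma star}. The three-case classification then follows at once by comparing the endpoints $\Lambda_\pm(\star)$ with $1$ and inverting $\cosh$, which produces the closed-form thresholds \cref{equation: definition of gammaj}.

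For the Witten index I would use the factorisation $U_m = \varGamma_m M$ with $M$ the self-adjoint block matrix of \cref{equation: definition of Um}. Since $\varGamma_m^2 = 1$ by \cref{equation: condition of p and q}, the imaginary part is $Q = (U_m - U_m^*)/(2i) = \tfrac{1}{2i}[\varGamma_m, M]$; the commutator is automatically off-diagonal in the splitting $\ker(\varGamma_m - 1) \oplus \ker(\varGamma_m + 1)$, so the Witten-index block $Q_0$ coincides, up to a factor of $\pm i$, with the compression of $M$ between the two eigenspaces of $\varGamma_m$. I would next diagonalise $\varGamma_m$ by an explicit local unitary $V$ built from $p, q$ together with the shift $L^m$; in the new basis $Q_0$ takes the shape of a multiplication operator plus a Laurent-type shift term. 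Substituting the asymptotic data shows that the magnitudes at $\star$ of these two parts are exactly $|a(\star)|$ and $|p_\gamma(\star)|$ respectively, which immediately yields the Fredholmness criterion $|p_\gamma(\star)| \neq |a(\star)|$ at each end.

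The final step would be the abstract one-dimensional bulk-boundary correspondence promised in the introduction. The operator $Q_0$ is asymptotic at each end to a bi-infinite Laurent operator whose winding number can be read off the formula above: it equals $m \cdot \sgn p(\star)$ when $|p_\gamma(\star)| > |a(\star)|$ and vanishes otherwise, with the factor $m$ inherited from the shift $L^m$ present in $\varGamma_m$. Summing the two asymptotic contributions with signs $+$ at $+\infty$ and $-$ at $-\infty$ (as dictated by the orientation of $\Z$) recovers the four cases of \cref{equation: Witten index formula}. The main obstacle I anticipate lies precisely here: because $U_m$ is non-unitary, the spectral-flow and Cayley-transform arguments available in earlier work on chirally symmetric quantum walks do not apply, so the bulk-boundary formula must be established by a direct homotopy of the asymptotic data through Fredholm operators, tracking the index via a Gohberg-Krein-type computation inside the Calkin-algebra quotient.
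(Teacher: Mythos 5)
Your strategy for both assertions is essentially the one the paper follows: part (ii) by reducing to the two asymptotic symbol matrices and solving the quadratic characteristic equation $\lambda^2 - 2\Lambda\lambda + 1 = 0$ (your trace and determinant computations agree with the paper's), and part (i) by diagonalising $\varGamma_m$ with an explicit local unitary, identifying the off-diagonal block of $Q = \tfrac{1}{2i}[\varGamma_m, M]$ with a Laurent-type symbol, and applying the one-dimensional bulk-boundary correspondence (\cref{theorem: topological invariants of strictly local operators}), whose winding numbers are $0$ or $m\,\sgn p(\star)$ according to whether the origin lies outside or inside an explicit ellipse. Two caveats on part (ii): your ``Weyl's theorem'' must mean stability of the Fredholm essential spectrum under compact perturbations (valid for non-normal operators), not Weyl's singular-sequence criterion, which the paper explicitly notes is unavailable since $U_m$ is not normal; and after splitting $\Z$ into half-lines the asymptotic pieces are Toeplitz-type truncations rather than bi-infinite Laurent operators, so you still need the fact that their essential spectrum equals $\bigcup_{z\in\T}\sigma(\hat U_m(z,\star))$ --- which is precisely \cref{theorem: topological invariants of strictly local operators}~(ii).

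The one genuine gap is in part (i). The diagonalising unitary necessarily involves a phase $\theta(x)$ with $q(x) = |q(x)|e^{i\theta(x)}$, and when $q(\star) = 0$ (equivalently $|p(\star)| = 1$, a case the index formula must cover, since there $|p_\gamma(\star)| = 1$ and the winding contribution is generically nonzero) this phase need not converge as $x \to \star$. The coefficients of your Laurent-type operator then have no two-sided limits, and the bulk-boundary correspondence is not directly applicable. The paper repairs this by conjugating $Q_{\epsilon_0}$ with two further invertible diagonal phase operators $e^{\pm i\theta_\pm}$, chosen case by case so that the resulting coefficients do converge; this leaves the Fredholm index unchanged because multiplication by invertibles does. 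Finally, your closing worry about having to re-establish the bulk-boundary formula for non-unitary operators via a Gohberg--Krein homotopy is unfounded: \cref{theorem: topological invariants of strictly local operators} is stated for arbitrary strictly local operators with convergent coefficients, with no unitarity or normality hypothesis, so it already supplies both the Fredholm criterion and the index formula once the convergence issue above is resolved.
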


Explicit formulas for $\sigma(\star) \subseteq \T \cup \R$ in Cases I, II, III of \cref{maintheorem: generalised mko}~(ii) will be given shortly in \cref{section: gapless time-evolution}. This will allow us to classify $\sigma(\star)$ into the $6$ different cases as in \cref{figure: three cases}. 

\begin{remark}
If $\gamma$ is identically zero and if $m = 1,$ then $U_1$ is the \textit{unitary} time-evolution of a split-step quantum walk, and the formula for $\ind(\varGamma_1, U_1)$ can be found in \cite{Suzuki-Tanaka-2019,Matsuzawa-2020,Tanaka-2020}.  Similarly, under the same assumption, \cref{maintheorem: generalised mko}~(ii) coincides with \cite[Theorem 30]{Suzuki-Tanaka-2019} or \cite[Theorem B~(ii)]{Tanaka-2020}.
\end{remark}

\subsection{Discussion}
\label{section: gapless time-evolution}

Let us start with the following lemma;
\begin{lemma}
\label{lemma: essential spectrum of essentially unitary U}
Let $(\varGamma, U)$ be an abstract chiral pair on a Hilbert space $\cH,$ and let $Q$ be the imaginary part of $U.$ If $U$ is essentially unitary (i.e. $U^*U - 1, UU^* - 1$ are compact), then 
\begin{equation}
\label{equation: essential spectral mapping theorem}
\ess(Q) = \left\{\frac{z - z^*}{2i} \mathrel{} \middle| \mathrel{} z \in \ess(U)\right\}.
\end{equation}
That is, if $U$ is essentially unitary, then the chiral pair $(\varGamma, U)$ is Fredholm if and only if $U$ is essentially gapped in the sense of \cref{section: preliminaries}.
\end{lemma}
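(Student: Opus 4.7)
The plan is to recast the claim in terms of the Calkin algebra $\mathcal{C}(\cH) := B(\cH)/K(\cH),$ and then invoke the spectral mapping theorem for normal elements. By Atkinson's theorem, the essential spectrum of any bounded operator $X$ on $\cH$ coincides with the spectrum of its image in the Calkin algebra: $\ess(X) = \sigma_{\mathcal{C}(\cH)}(\pi(X)),$ where $\pi \colon B(\cH) \to \mathcal{C}(\cH)$ is the canonical quotient $*$-homomorphism. The hypothesis that $U$ is essentially unitary, namely $U^*U - 1, UU^* - 1 \in K(\cH),$ is precisely the assertion that $\pi(U)$ is a unitary element of $\mathcal{C}(\cH);$ in particular $\ess(U) = \sigma_{\mathcal{C}(\cH)}(\pi(U)) \subseteq \T.$

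Next I would apply the continuous functional calculus to the normal (in fact unitary) element $\pi(U)$ with the continuous function $f \colon \T \to [-1,1],$ $f(z) := (z - z^*)/(2i) = \Im z.$ Since $\pi$ is a $*$-homomorphism,
\[
f(\pi(U)) = \frac{\pi(U) - \pi(U)^*}{2i} = \pi\!\left(\frac{U - U^*}{2i}\right) = \pi(Q).
\]
The spectral mapping theorem for the continuous functional calculus on a normal element then yields
\[
\ess(Q) \;=\; \sigma_{\mathcal{C}(\cH)}(\pi(Q)) \;=\; f\!\left(\sigma_{\mathcal{C}(\cH)}(\pi(U))\right) \;=\; \left\{\, \frac{z - z^*}{2i} \;\middle|\; z \in \ess(U) \,\right\},
\]
which is exactly the identity \cref{equation: essential spectral mapping theorem}.

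For the second assertion, I would use that $(\varGamma, U)$ is Fredholm iff $Q_0$ is Fredholm iff $0 \notin \ess(Q).$ The equivalence of the first two conditions follows from the off-diagonal block form of $Q$ given in \cref{equation: representation of R and Q} together with the self-adjointness of $Q,$ which gives $\ker Q = \ker Q_0 \oplus \ker Q_0^*$ and the same closed-range dichotomy for $Q$ and $Q_0.$ Combining this with the spectral mapping identity above and with $\ess(U) \subseteq \T,$ we see that $0 \in \ess(Q)$ precisely when some $z \in \ess(U)$ satisfies $\Im z = 0,$ which on $\T$ forces $z \in \{-1, +1\}.$ Hence $(\varGamma, U)$ is Fredholm iff $\ess(U) \cap \{-1, +1\} = \emptyset,$ i.e.\ iff $U$ is essentially gapped.

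There is no substantive obstacle in this argument; the only point that needs care is the recognition that essential unitarity is exactly what promotes $\pi(U)$ to a unitary (hence normal) element of the Calkin algebra, at which stage the $C^*$-algebraic continuous functional calculus on $\T$ does all the work and the identification of essential spectra is automatic.
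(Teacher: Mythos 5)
Your proposal is correct and follows essentially the same route as the paper, which (citing an analogous argument elsewhere) sketches exactly this: pass to the Calkin algebra, where essential unitarity makes $\pi(U)$ unitary, and apply the spectral mapping theorem to the trigonometric polynomial $(z-z^*)/(2i)$. Your write-up simply fills in the details the paper omits, and the reduction of Fredholmness of $(\varGamma,U)$ to $0\notin\ess(Q)$ is already built into the paper's definition, so no gap remains.
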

\begin{proof}
The formula \cref{equation: essential spectral mapping theorem} can be easily proved by using the spectral mapping theorem and the trigonometric polynomial $p(z) := (z - z^*)/(2i).$ We omit the proof, since an analogous argument can be found in \cite[Lemma 3.6 ]{Tanaka-2020}.  
\end{proof}

As in the following theorem, given an abstract chiral pair $(\varGamma,U),$ where $U$ may not necessarily be essentially unitary, the essential gappedness of $U$ is no longer an indispensable assumption to ensure the Fredholmness of the chiral pair $(\varGamma, U);$

\begin{theorem}
\label{theorem: main}
With the notation introduced in \cref{maintheorem: generalised mko}, suppose that the following hold true for each $\star = \pm \infty:$
\begin{equation}
\label{equation: main}
|p_\gamma(\star)| \neq |a(\star)|, \qquad \gamma_-(\star) < |\gamma(\star)| < \gamma_+(\star).
\end{equation}
Then $(\varGamma_m, U_m)$ is Fredholm, yet $U_m$ fails to be essentially gapped.
\end{theorem}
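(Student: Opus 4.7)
The plan is to read off both conclusions directly from \cref{maintheorem: generalised mko}, applied componentwise at $\star = \pm \infty$.

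For the Fredholmness assertion, the first half of \cref{equation: main} is precisely the necessary and sufficient criterion supplied by part (i) of the main theorem, so the chiral pair $(\varGamma_m, U_m)$ is Fredholm with no further work. (If one also wanted a value for the Witten index, the split in (C2) would apply; but only Fredholmness is claimed here.)

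For the failure of essential gappedness, I would invoke part (ii). The second half of \cref{equation: main}, namely $\gamma_-(\star) < |\gamma(\star)| < \gamma_+(\star)$ for both $\star = \pm\infty$, places each asymptotic contribution $\sigma(\pm\infty)$ squarely in Case II of the classification. By the statement of that case, each $\sigma(\star)$ is a connected subset of $\T \cup \R$ that contains $s(\star) = \sgn(p(\star)a(\star))$, and the convention in \cref{equation: definition of sign function} forces $s(\star) \in \{-1,+1\}$. Since $\ess(U_m) = \sigma(-\infty) \cup \sigma(+\infty)$ by part (ii), at least one of $\pm 1$ therefore lies in $\ess(U_m)$, which is precisely what it means for $U_m$ to fail to be essentially gapped in the sense of \cref{section: preliminaries}.

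Honestly, there is essentially no obstacle to overcome: once \cref{maintheorem: generalised mko} has been established, \cref{theorem: main} is an immediate corollary. The only point worth a sentence of bookkeeping is that $s(\star)$ is genuinely defined in Case II, i.e. that $p(\star)a(\star) \neq 0$; but this is automatic, because the strict inequality $\gamma_-(\star) < \gamma_+(\star)$ combined with \cref{equation: definition of gammaj} forces the denominator $|p(\star)a(\star)|$ to be strictly positive. The real content of \cref{theorem: main} is thus not the proof itself but the observation that the two hypotheses of \cref{equation: main} are mutually compatible, thereby exhibiting a Fredholm chiral pair whose non-unitary time-evolution is essentially gapless—a configuration forbidden in the unitary setting by \cref{lemma: essential spectrum of essentially unitary U}.
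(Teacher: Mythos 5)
Your proposal is correct and follows essentially the same route as the paper: Fredholmness is read off from \cref{maintheorem: generalised mko}~(i) via the first hypothesis in \cref{equation: main}, and the second hypothesis places each $\sigma(\star)$ in Case~II of part~(ii), so that $\ess(U_m)=\sigma(-\infty)\cup\sigma(+\infty)$ contains $s(\star)\in\{-1,+1\}$ and $U_m$ is essentially gapless. The paper's proof merely adds an explicit description of $\sigma(\star)$ in terms of $g(x)=x+\sqrt{x^2-1}$ (used for \cref{figure: three cases} and the examples, not needed for the theorem itself), and your remark that the strict inequality $\gamma_-(\star)<\gamma_+(\star)$ forces $p(\star)a(\star)\neq 0$ is a correct, if minor, extra check.
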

\begin{proof}
Let us first start with further classification of $\ess(U)$ given by \cref{maintheorem: generalised mko}~(ii). We consider the following $\R$-valued function $g$ defined on $(-\infty,-1] \cup [1, \infty);$
\[
g(x) := x + \sqrt{x^2 - 1}, \qquad x \in (-\infty,-1] \cup [1, \infty).
\]
\cref{figure: graph of g} shows the graphs of $g, 1/g;$
\begin{figure}[H]
\centering
\label{graph: graph of g}
\begin{tikzpicture}[baseline=(current  bounding  box.center)]
\begin{axis}[legend pos = north west,  ytick= {-1,0,1}, yticklabel style={anchor = north west}, xtick= {-1,0,1}, xticklabel style={anchor = north west}, xmin= -3, xmax=3, ymin=-3, ymax=3, legend pos = north west, axis lines=center, xlabel=$x$, ylabel=$y$, xlabel style={anchor = west}
]
\addlegendentry{$y = g(x)$}
	\addplot[color = black, samples=200,domain=1:4, very thick]{x + sqrt(x^2 - 1)};
	\addplot[color = black, samples=200,domain=-1:-4, very thick, forget plot]{x + sqrt(x^2 - 1)};
	\addplot [color = black!20!white, domain=-4:4,samples=50, smooth, dashed, forget plot]({x},{1}); 
	\addplot [color = black!20!white, domain=-3:3,samples=50, smooth, dashed, forget plot]({1},{x}); 
	\addplot [color = black!20!white, domain=-4:4,samples=50, smooth, dashed, forget plot]({x},{-1}); 
	\addplot [color = black!20!white, domain=-3:3,samples=50, smooth, dashed, forget plot]({-1},{x}); 
\addlegendentry{$y = \frac{1}{g(x)}$}
	\addplot[color = lightgray, densely dashed, samples=200,domain=1:4,  very thick]{1/(x + sqrt(x^2 - 1))};
	\addplot[color = lightgray, densely dashed,samples=200,domain=-1:-4, very thick, forget plot]{1/(x + sqrt(x^2 - 1))};
\end{axis}
\end{tikzpicture}
\caption{The black graph corresponds to $g,$ and the gray graph corresponds to $g^{-1}.$}
\label{figure: graph of g}
\end{figure}
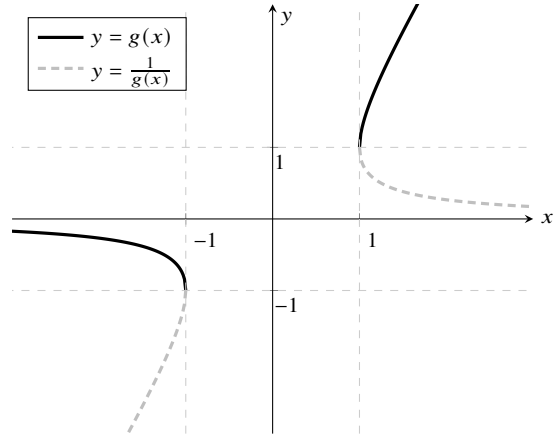
Evidently, $g(x)g(-x)^{-1} = -1$ for $|x| \geq 1.$ It follows that for each $\star = \pm \infty$ the set $\sigma(\star)$ admits the following further classification;
\begin{itemize}
\item \textnormal{\textbf{Case I. }} If $|\gamma(\star)| \leq \gamma_-(\star),$ then
\[
\sigma(\star) 
=
\begin{cases}
\{z \in \T \mid \Re z \in [\Lambda_-(\star), \Lambda_+(\star)]\},    &  s(\star) = 1, \\
\{z \in \T \mid \Re z \in [-\Lambda_+(\star), -\Lambda_-(\star)]\}, & s(\star) = -1.
\end{cases}
\]

\item \textnormal{\textbf{Case II. }} If $\gamma_-(\star) < |\gamma(\star)| < \gamma_+(\star),$ then
\[
\sigma(\star) 
=
\begin{cases}
\{z \in \T \mid \Re z \in [\Lambda_-(\star), 1]\} \cup [g(\Lambda_+(\star))^{-1}, g(\Lambda_+(\star))],& s(\star) = 1, \\
\{z \in \T \mid \Re z \in [-1, -\Lambda_-(\star)]\} \cup [-g(\Lambda_+(\star)), -g(\Lambda_+(\star))^{-1}],& s(\star) = -1.
\end{cases}
\]

\item \textnormal{\textbf{Case III. }} If $\gamma_+(\star) \leq |\gamma(\star)|,$ then
\[
\sigma(\star) 
=
\begin{cases}
[g(\Lambda_+(\star))^{-1},g(\Lambda_-(\star))^{-1}] \cup [g(\Lambda_-(\star)), g(\Lambda_+(\star))], & s(\star) = 1, \\
[-g(\Lambda_+(\star)),-g(\Lambda_-(\star))] \cup [-g(\Lambda_-(\star))^{-1},-g(\Lambda_+(\star))^{-1}],& s(\star) = -1.
\end{cases}
\]
\end{itemize}
That is, $\sigma(\star)$ is classified into the $6$ different cases as in \cref{figure: three cases}. It immediately follows from \cref{maintheorem: generalised mko} and \cref{equation: main} that $(\varGamma_m, U_m)$ is Fredholm, and that $\ess(U_m) = \sigma(-\infty) \cup \sigma(+\infty).$ In particular, for each $\star = \pm \infty$ the set $\sigma(\star)$ is classified as Case II. That is, each $\sigma(\star)$ is a connected subset of $\T \cup \R$ containing either $-1$ or $+1,$ and so $U_m$ fails to be essentially gapped. The claim follows.
\end{proof}

The current section concludes with the following two numerical examples:

\begin{example}
\label{example: first example}
Let $(\varGamma_m, U_m)$ be the chiral pair in \cref{maintheorem: generalised mko}. Let 
\[
p_0 := 0.2, \qquad a_0 := 0.1, \qquad \gamma_0 := 0.4. 
\]
If $a(\pm \infty) := \pm a_0$ and $p(\pm \infty) := \pm p_0,$ then \cref{equation: definition of gammaj} becomes
\begin{align*}
\gamma_-(-\infty) 
&= \gamma_-(+ \infty) 
= \frac{1}{2} \cosh^{-1} \left(\frac{1  - \sqrt{1 - p_0^2}\sqrt{1 - a_0^2}}{|p_0 a_0|}\right)
= 0.350396, \\
\gamma_+(-\infty) 
&= \gamma_+(+ \infty)  
= \frac{1}{2} \cosh^{-1} \left(\frac{1 + \sqrt{1 - p_0^2}\sqrt{1 - a_0^2}}{|p_0 a_0|}\right)
= 2.64283.
\end{align*}
If we let $\gamma(\pm \infty) := \gamma_0,$ then $\gamma_-(\pm \infty) < |\gamma_0| < \gamma_+(\pm \infty).$ It follows from \cref{equation: definition of sigma star} that $\ess(U_m) = \sigma(-\infty) = \sigma(+\infty),$ since $s(-\infty) = s(+\infty) = 1.$ More precisely, the set $\ess(U_m) = \sigma(\pm \infty)$ is classified as Case II:
\begin{align*}
\Lambda_\pm &:= |p_0 a_0| \cosh(2\gamma_0)  \pm \sqrt{1 - p_0^2}\sqrt{1 - a_0^2}, \\
\ess(U_m) &:= \{z \in \T \mid \Re z \in [\Lambda_-, 1]\} \cup [g(\Lambda_+)^{-1}, g(\Lambda_+)].
\end{align*}
The black region in Cases II of \cref{figure: three cases} depicts the connected subset $\ess(U_m)$ of $\T \cup \R$ containing $1.$ It follows that $U_m$ is not essentially gapped. Furthermore, \cref{equation: definition of pgamma} becomes
\[
|p_\gamma (\pm \infty)| = 
\frac{|p_0|}{\sqrt{p_0^2 + (1 - p_0) \cosh^2(2\gamma_0)}} = 0.150876 > |a(\pm \infty)| = 0.1.
\]
It follows that $(\varGamma_m, U_m)$ is Fredholm, and $\ind(\varGamma_m, U_m) = m(+1 - (-1)) = 2m$ by the index formula \cref{equation: Witten index formula}. That is, we have constructed the Fredholm chiral pair $(\varGamma_m, U_m),$ in such a way that $U_m$ fails to be essentially gapped, yet $\ind(\varGamma_m, U_m) = 2m$ is well-defined.
\end{example}

\begin{example}
\label{example: second example}
Let $U_{\textnormal{mko}}$ be the non-unitary evolution operator given by \cref{equation: definition of evolution operator of MKO}, where we assume the existence of the two-sided limits \cref{equation: existence of limits}. We define $p,q,a,b$ according to \cref{equation: mko substitution}, and \cref{theorem: MKO model} asserts $(\varGamma_{\textnormal{mko}},U_{\textnormal{mko}}) \simeq (\varGamma_2, U_2),$ where $\simeq$ denotes unitary equivalence of chiral pairs. As in \cref{example: first example}, we choose $\theta_1(\pm \infty), \theta_2(\pm \infty), \gamma(\pm \infty)$ in such a way that 
\[
p(\pm \infty) = \pm 0.2, \qquad a(\pm \infty) = \pm 0.1, \qquad \gamma(\pm \infty) = 0.4. 
\]
It follows that $(\varGamma_{\textnormal{mko}},U_{\textnormal{mko}}) \simeq (\varGamma_2, U_2)$ is Fredholm, and $\ind(\varGamma_{\textnormal{mko}},U_{\textnormal{mko}}) = 4.$ Furthermore, $U_{\textnormal{mko}}$ is essentially gapless.
\end{example}

\section{Proofs of the main theorems}
\label{section: proof of main theorem}

\subsection{Unitary invariance of the Witten index (\texorpdfstring{\cref{lemma: unitary invariance of the witten index}}{Lemma})}

We prove the unitary invariance of the Witten index (\cref{lemma: unitary invariance of the witten index}). Note that the special case of this invariance principle for unitary $U$ can be found in \cite[Corollary 3.6]{Suzuki-2019}, the proof of which makes use of a spectral mapping theorem for chirally symmetric unitary operators
\cite{Segawa-Suzuki-2016,Segawa-Suzuki-2019}. We give the following direct proof instead;

\begin{proof}[Proof of \cref{lemma: unitary invariance of the witten index}]
Let $(\varGamma,U), (\varGamma',U')$ be two unitarily equivalent chiral pairs on a Hilbert space $\cH.$  That is, there exists a unitary operator $\epsilon$ on $\cH,$ such that
$
(\varGamma',U') = (\epsilon^* \varGamma \epsilon, \epsilon^* U \epsilon).
$
Let $\cH_\pm := \ker(\varGamma \mp 1),$ and let $\cH'_\pm := \ker(\varGamma' \mp 1).$ We may assume  that the operator $\epsilon$ admits the following block-operator matrix representation;
\begin{align*}
&\epsilon = 
\begin{pmatrix}
\epsilon_{+} &  \epsilon_{-+} \\
\epsilon_{+-} & \epsilon_{-}
\end{pmatrix}, 
&&
\begin{aligned}
\epsilon_{+}&: \cH'_+ \to \cH_+,  \qquad &  \epsilon_{-+}&: \cH'_- \to \cH_+, \\
\epsilon_{+-}&: \cH'_+ \to \cH_-, \qquad &  \epsilon_{-}&: \cH'_- \to \cH_-.
\end{aligned}
\end{align*}
Recall that the operators $U,U'$ admit the following block-operator matrix representations respectively according to \cref{section: preliminaries}:
\[
U = 
\begin{pmatrix}
R_1 & iQ_0^* \\
iQ_0 & R_2
\end{pmatrix}_{\cH_+ \oplus \cH_-}, \qquad 
U'= 
\begin{pmatrix}
R'_1 & i(Q'_0)^* \\
iQ'_0 & R'_2
\end{pmatrix}_{\cH'_+ \oplus \cH'_-}.
\]
Since $0 = \epsilon \varGamma' - \varGamma \epsilon,$ where $\varGamma = 1 \oplus (-1)$ and $\varGamma' = 1 \oplus (-1),$ we obtain
\[
0 =
\begin{pmatrix}
\epsilon_{+} &  \epsilon_{-+} \\
\epsilon_{+-} & \epsilon_{-}
\end{pmatrix}
\begin{pmatrix}
1 & 0 \\
0 & -1
\end{pmatrix}
-
\begin{pmatrix}
1 & 0 \\
0 & -1
\end{pmatrix}
\begin{pmatrix}
\epsilon_{+} &  \epsilon_{-+} \\
\epsilon_{+-} & \epsilon_{-}
\end{pmatrix}
=
\begin{pmatrix}
\epsilon_{+} &  -\epsilon_{-+} \\
\epsilon_{+-} & -\epsilon_{-}
\end{pmatrix}
+
\begin{pmatrix}
-\epsilon_{+} &  -\epsilon_{-+} \\
\epsilon_{+-} & \epsilon_{-}
\end{pmatrix}
=
\begin{pmatrix}
0 &  -2\epsilon_{-+} \\
2\epsilon_{+-} & 0
\end{pmatrix}.
\]
This implies $\epsilon = \epsilon_{+} \oplus \epsilon_{-} : \cH_+' \oplus \cH'_- \to \cH_+ \oplus \cH_-,$ and so 
\[
\epsilon^* U \epsilon = (\epsilon^*_{11} \oplus \epsilon_{-}^*) U (\epsilon_{+} \oplus \epsilon_{-}) =
\begin{pmatrix}
\epsilon^*_{+}  R_1 \epsilon_{+} & i \epsilon^*_{+}  Q_0^* \epsilon_{-} \\
i \epsilon^*_{-}  Q_0 \epsilon_{+} & \epsilon^*_{-} R_2 \epsilon_{-}\\
\end{pmatrix} =
\begin{pmatrix}
R'_1 & i (Q'_{0})^* \\
i Q'_{0} & R'_{2} \\
\end{pmatrix}.
\]
Since $Q'_{0} = \epsilon^*_{-}  Q_0 \epsilon_{+},$ where $\epsilon_{+},\epsilon_{-}$ are unitary, we have that $Q_{0}$ is Fredholm if and only if so is $Q'_0.$ In this case, $\ind Q_{0} = \ind Q'_0.$ The claim follows.
\end{proof}

\subsection{Unitary transform of the Mochizuki-Kim-Obuse model (\texorpdfstring{\cref{theorem: MKO model}}{Theorem B})}
\label{section: MKO model}

\begin{proof}[Proof of \cref{theorem: MKO model}]
We shall make use of the fact that the operator $S$ commutes with any diagonal block-operator matrices in this proof. We have
\begin{align*}
U_{\textnormal{mko}} C_1^{-1}
&= S (G \Phi) C_2 S (G^{-1} \Phi)  \\
&= S
\begin{pmatrix}
e^{\gamma + i \phi} & 0  \\
0 & e^{-\gamma(\cdot + 1)-i\phi(\cdot + 1)}
\end{pmatrix}
\begin{pmatrix}
\cos \theta_2 & i \sin \theta_2 \\
i \sin \theta_2 & \cos \theta_2 
\end{pmatrix}
S \begin{pmatrix}
e^{-\gamma + i \phi} & 0  \\
0 & e^{\gamma(\cdot + 1) -i\phi(\cdot + 1)}
\end{pmatrix}\\
&=
S
\begin{pmatrix}
e^{\gamma + i \phi} & 0  \\
0 & e^{-\gamma(\cdot + 1)-i\phi(\cdot + 1)}
\end{pmatrix}
\begin{pmatrix}
\cos \theta_2 & i \sin \theta_2 \\
i \sin \theta_2 & \cos \theta_2 
\end{pmatrix}
\begin{pmatrix}
e^{-\gamma(\cdot + 1) + i \phi(\cdot + 1)} & 0  \\
0 & e^{\gamma -i\phi}
\end{pmatrix}S \\
&=
S
\begin{pmatrix}
\cos \theta_2 e^{\gamma  -\gamma(\cdot + 1) +i (\phi + \phi(\cdot + 1))} & i \sin \theta_2 e^{2\gamma}\\
i \sin \theta_2e^{-2\gamma(\cdot + 1)}  & \cos \theta_2 e^{\gamma -\gamma(\cdot + 1)-i(\phi +\phi(\cdot + 1))}
\end{pmatrix}
S,
\end{align*}
where the third equality follows from $L^{\pm 1} \Psi = \Psi(\cdot \pm 1)$ for any $\Psi \in \ell^2(\Z).$ If $\sigma_2 = 
\begin{pmatrix}
0 & -i  \\
i & 0
\end{pmatrix}$ 
denotes the second Pauli matrix, then $\sigma_2^2 = 1,$ and so
\begin{align*}
U_{\textnormal{mko}}  &= 
(S\sigma_2)
\sigma_2 
\begin{pmatrix}
\cos \theta_2 e^{\gamma  -\gamma(\cdot + 1) +i (\phi + \phi(\cdot + 1))} & i \sin \theta_2 e^{2\gamma}\\
i \sin \theta_2e^{-2\gamma(\cdot + 1)}  & \cos \theta_2 e^{\gamma -\gamma(\cdot + 1)-i(\phi +\phi(\cdot + 1))}
\end{pmatrix}
(S\sigma_2) (\sigma_2 C_1) \\
&=
(S\sigma_2)
\begin{pmatrix}
\sin \theta_2e^{-2\gamma(\cdot + 1)}  & -i\cos \theta_2 e^{\gamma -\gamma(\cdot + 1)-i(\phi +\phi(\cdot + 1))} \\
i\cos \theta_2 e^{\gamma  -\gamma(\cdot + 1) +i (\phi + \phi(\cdot + 1))} & - \sin \theta_2 e^{2\gamma}\\
\end{pmatrix}
(S\sigma_2) (\sigma_2 C_1) \\
&=
(S\sigma_2)
\begin{pmatrix}
a e^{-2\gamma(\cdot + 1)}  & b^* e^{\gamma -\gamma(\cdot + 1)} \\
be^{\gamma  -\gamma(\cdot + 1)} & - a e^{2\gamma}\\
\end{pmatrix}
(S\sigma_2) (\sigma_2 C_1).
\end{align*}
If we let $\eta := (\sigma_2 C_1) (S \sigma_2),$ where
$\sigma_2 C_1$ and $S\sigma_2$ are unitary involutions, then 
\begin{align*}
\eta^* U_{\textnormal{mko}} \eta 
&= 
(S \sigma_2)
(\sigma_2 C_1) 
(S\sigma_2)
\begin{pmatrix}
a e^{-2\gamma(\cdot + 1)}  & b^* e^{\gamma -\gamma(\cdot + 1)} \\
be^{\gamma  -\gamma(\cdot + 1)} & - a e^{2\gamma}\\
\end{pmatrix}
(S\sigma_2) (\sigma_2 C_1)(\sigma_2 C_1) (S \sigma_2) \\
&= 
(S \sigma_2)
(\sigma_2 C_1) 
(S\sigma_2)
\begin{pmatrix}
a e^{-2\gamma(\cdot + 1)}  & b^* e^{\gamma -\gamma(\cdot + 1)} \\
be^{\gamma  -\gamma(\cdot + 1)} & - a e^{2\gamma}\\
\end{pmatrix}.
\end{align*}
It remains to compute $(S \sigma_2)(\sigma_2 C_1)(S\sigma_2);$ 
\[
(S \sigma_2) (\sigma_2 C_1) (S\sigma_2)
= 
\begin{pmatrix}
0 & -iL \\
iL^{-1} & 0  \\
\end{pmatrix}
\begin{pmatrix}
\sin \theta_1 & -i\cos \theta_1  \\
i\cos \theta_1 & -\sin \theta_1  \\
\end{pmatrix}
\begin{pmatrix}
0 & -iL \\
iL^{-1} & 0  \\
\end{pmatrix}
= \varGamma_2.
\]
If we let $\varGamma_{\textrm{mko}} := \eta \varGamma_2 \eta^*,$ then $\eta^* \varGamma_{\textrm{mko}} \eta  = \varGamma_2.$ The claim follows.
\end{proof}

\subsection{Classification of the topological invariants (\texorpdfstring{\cref{maintheorem: generalised mko}}{Theorem })}

\subsubsection{Strictly local operators}

To prove \cref{maintheorem: generalised mko}, let us first introduce one preliminary concept beforehand. With the obvious orthogonal decomposition $\ell^2(\Z, \C^n) = \bigoplus_{j=1}^n \ell^2(\Z, \C)$ in mind, we shall consider an operator of the form
\begin{align}
\label{equation2: characterisation of strict locality} 
A 
= 
\begin{pmatrix}
\sum^k_{y=-k} a_{11}(y, \cdot) L^{y} & \dots & \sum^k_{y=-k} a_{1n}(y, \cdot) L^{y} \\
\vdots & \ddots& \vdots \\
\sum^k_{y=-k} a_{n1}(y, \cdot) L^{y} & \dots & \sum^k_{y=-k} a_{nn}(y, \cdot) L^{y} \\
\end{pmatrix},
\end{align}
where $k$ is a finite natural number, and where each $a_{ij}(y, \cdot) = (a_{ij}(y, x))_{x \in \Z}$ is an arbitrary bounded $\C$-valued sequence viewed as a multiplication operator on $\ell^2(\Z, \C) = \bigoplus_{x \in \Z} \C.$ An operator the form \cref{equation2: characterisation of strict locality} will be referred to as a (one-dimensional) \textbi{strictly local operator} following \cite[\textsection 1.2]{Cedzich-Geib-Stahl-Velazquez-Werner-Werner-2018}.

\begin{theorem}[{\cite[Theorem A]{Tanaka-2020}}]
\label{theorem: topological invariants of strictly local operators}
Let $A$ be a strictly local operator of the form \cref{equation2: characterisation of strict locality} with the property that the following two-sided limits exist:
\begin{equation}
\label{equation: two-phase assumptions}
a_{ij}(y, \pm \infty) := \lim_{x \to \pm \infty} a_{ij}(y,x) \in \C, \qquad i,j = 1, \dots, n ,\ -k \leq y \leq k.
\end{equation}
Let
\begin{align}
\label{equation: definition of Apm}
A(\pm \infty) 
&:=  
\begin{pmatrix}
\sum^k_{y=-k} a_{11}(y, \pm \infty) L^{y} & \dots & \sum^k_{y=-k} a_{1n}(y, \pm \infty) L^{y} \\
\vdots & \ddots& \vdots \\
\sum^k_{y=-k} a_{n1}(y, \pm \infty) L^{y} & \dots & \sum^k_{y=-k} a_{nn}(y, \pm \infty) L^{y} \\
\end{pmatrix}, \\
\label{equation: definition of Fourier transform of A}
\hat{A}(z,\pm \infty)
&:=  
\begin{pmatrix}
\sum^k_{y=-k} a_{11}(y, \pm \infty) z^{y} & \dots & \sum^k_{y=-k} a_{1n}(y, \pm \infty) z^{y} \\
\vdots & \ddots& \vdots \\
\sum^k_{y=-k} a_{n1}(y, \pm \infty) z^{y} & \dots & \sum^k_{y=-k} a_{nn}(y, \pm \infty) z^{y} \\
\end{pmatrix}, \qquad z \in \T.
\end{align}
Then the following assertions hold true:
\begin{enumerate}
\item We have that $A$ is Fredholm if and only if $\T \ni z \longmapsto \det \hat{A}(z,\star) \in \C$ is nowhere vanishing on $\T$ for each $\star = \pm \infty.$ In this case, the Fredholm index of $A$ is given by
\begin{equation}
\label{equation: bulk-edge correspondence}
\ind(A) = \wn \left(\det \hat{A}(\cdot,+\infty) \right) - \wn \left(\det \hat{A}(\cdot, -\infty) \right),
\end{equation}
where $\wn \left(\det \hat{A}(\cdot, \star) \right)$ denotes the winding number of the continuous function $\T \ni z \longmapsto \det \hat{A}(z,\star) \in \C$ with respect to the origin.

\item The essential spectrum of $A$ is given by
\begin{align*}
&\ess(A) = \ess(A(- \infty)) \cup \ess(A(+ \infty)), \\
&\ess(A(\star)) = \bigcup_{z \in \T} \sigma(\hat{A}(z,\star)), \qquad \star = \pm \infty.
\end{align*}
\end{enumerate}
\end{theorem}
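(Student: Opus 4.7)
The plan is to establish part (ii) first by Fourier analysis together with a cutoff/compact-perturbation argument, then deduce the Fredholm criterion of part (i) as an immediate corollary, and finally derive the index formula via a block-Toeplitz/homotopy argument.

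For part (ii), I would apply the Fourier isomorphism $\mathcal{F}: \ell^2(\Z,\C^n) \to L^2(\T, \C^n)$ sending $L$ to multiplication by $z$. This conjugates each translation-invariant operator $A(\pm\infty)$ to the multiplication operator by the matrix-valued symbol $\hat{A}(\cdot,\pm\infty)$; such multiplication operators on $L^2(\T, \C^n)$ have purely essential spectrum equal to $\bigcup_{z \in \T} \sigma(\hat{A}(z,\pm\infty))$, giving the formula for $\ess(A(\pm\infty))$. To transfer this to the two-phase operator $A$, fix large $N \in \N$ and let $P_\pm$ denote the orthogonal projections onto $\ell^2([\pm N, \pm \infty), \C^n)$. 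Using hypothesis \cref{equation: two-phase assumptions} together with the finite-band condition $|y|\leq k$ in \cref{equation2: characterisation of strict locality}, the operator
\[
A - P_-A(-\infty)P_- - P_+A(+\infty)P_+
\]
is compact: outside the tail regions it has finite rank, and on the tails each band $a_{ij}(y,\cdot)L^y$ converges in norm to $a_{ij}(y,\pm\infty)L^y$, so the discrepancy is approximable by finite-rank operators. Weyl's theorem then gives $\ess(A) = \ess(P_-A(-\infty)P_-) \cup \ess(P_+A(+\infty)P_+)$, and since compressing a translation-invariant operator to a half-line preserves its essential spectrum (the compression differs from the full-line operator by a Fredholm difference with only discrete spectrum), assertion (ii) follows.

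For part (i), the Fredholm criterion is an immediate corollary: $A$ is Fredholm if and only if $0 \notin \ess(A)$, equivalently $\det\hat{A}(z,\pm\infty) \neq 0$ for every $z \in \T$. For the index formula I would invoke the classical Gohberg-Krein theory: for a continuous symbol $f: \T \to GL_n(\C)$, the half-line block-Toeplitz operator $T_f$ is Fredholm with index $-\wn(\det f)$. In the Fredholm case, the above compact-perturbation step identifies $A$ with a direct sum of two half-line Toeplitz operators with symbols $\hat{A}(\cdot,-\infty)$ and $\hat{A}(\cdot,+\infty)$, but carried on oppositely oriented half-lines (the left compression uses $L$ as a forward-shift, the right uses $L$ as a backward-shift). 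Reconciling the orientations flips one sign, yielding
\[
\ind(A) = \wn(\det \hat{A}(\cdot,+\infty)) - \wn(\det \hat{A}(\cdot, -\infty)).
\]

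The main obstacle is the precise sign/orientation bookkeeping in this Gohberg-Krein step, together with the reduction of $A$ to a direct sum up to compacts. The cleanest route is a homotopy argument within the Fredholm strictly local operators: the space of pairs of invertible matrix loops $(\hat{A}(\cdot,-\infty), \hat{A}(\cdot,+\infty))$ with fixed winding numbers $k_- := \wn(\det\hat{A}(\cdot,-\infty))$ and $k_+ := \wn(\det\hat{A}(\cdot,+\infty))$ is path-connected, so $A$ may be continuously deformed through Fredholm strictly local operators to a normal-form two-phase operator whose symbols are $\mathrm{diag}(z^{k_\pm}, 1, \ldots, 1)$. For this model the index reduces to that of a single scalar two-phase operator interpolating between $L^{k_-}$ on the left and $L^{k_+}$ on the right, whose kernel and cokernel can be read off explicitly to give $k_+ - k_-$, matching \cref{equation: bulk-edge correspondence}.
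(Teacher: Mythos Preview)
The paper does not give its own proof of this theorem: it is quoted verbatim as \cite[Theorem~A]{Tanaka-2020} and only the remark ``see \cite[\S2]{Tanaka-2020} for details'' is offered. So there is nothing in the present paper to compare your argument against.

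That said, your sketch is a correct and standard route to the result. A few remarks. First, the compact-perturbation step is cleanest if you pass through an intermediate \emph{sharp} two-phase operator $A_{\mathrm{sh}}$ whose coefficients equal $a_{ij}(y,-\infty)$ for $x<0$ and $a_{ij}(y,+\infty)$ for $x\geq 0$: then $A-A_{\mathrm{sh}}$ is compact because each band carries a multiplication by a $c_0$-sequence, and $A_{\mathrm{sh}}$ differs from the block-diagonal $P_-A(-\infty)P_-\oplus P_+A(+\infty)P_+$ by a finite-rank operator coming from the band of width $k$ across the cut. Second, be aware of the terminology: the paper explicitly cautions (in the remark following the proof of \cref{maintheorem: generalised mko}(ii)) that \emph{Weyl's criterion}---the singular-sequence characterisation---fails for non-normal operators. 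What you are invoking is the stability of the \emph{Fredholm} essential spectrum under compact perturbation (Atkinson's theorem in the Calkin algebra), which is valid without any normality assumption and is exactly what is needed here, since $\ess$ in this paper is defined as the Fredholm essential spectrum. Third, for the matrix multiplication operator $M_{\hat A(\cdot,\star)}$ on $L^2(\T,\C^n)$ you use that $\ess(M_f)=\sigma(M_f)=\bigcup_{z}\sigma(f(z))$; this holds because nonzero multiplication operators on $L^2(\T)$ are never compact, so Fredholm coincides with invertible. Your Gohberg--Krein/homotopy derivation of the index formula, with the sign flip coming from the opposite half-line orientations, is the standard argument and is correct.
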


\cref{theorem: topological invariants of strictly local operators} can be viewed as an abstract form of the 
one-dimensional bulk-boundary correspondence \cite[Corollary 4.3]{Cedzich-Geib-Grunbaum-Stahl-Velazquez-Werner-Werner-2018} (see \cite[\textsection 2]{Tanaka-2020} for details).

\subsubsection{Proof of \texorpdfstring{\cref{maintheorem: generalised mko}~(i)}{Theorem C (i)}}
\label{section: proof of the index formula}

\begin{notation}
We shall make use of the notation introduce in \cref{definition: Um}. For notational simplicity, we use the following notation throughout \cref{section: proof of the index formula};
\[
(\varGamma,U) := (\varGamma_m,U_m), \qquad 
C := 
\begin{pmatrix}
\alpha_1 & \beta^* \\
\beta & \alpha_2
\end{pmatrix} :=
\begin{pmatrix}
e^{-2 \gamma(\cdot + 1)}a & e^{\gamma - \gamma(\cdot + 1)}b^* \\
e^{\gamma - \gamma(\cdot + 1)}b & -e^{2 \gamma}a
\end{pmatrix}.
\]
With the above notation, the operator $U$ can be written as $U = \varGamma C.$
\end{notation}

In order to compute $\ind(\varGamma, U)$ we shall closely follow \cite[\textsection 3.2]{Tanaka-2020}. Note first that the underlying Hilbert space $\ell^2(\Z,\C^2)$ admits the following two orthogonal decompositions: 
\[
\ell^2(\Z,\C^2) 
= \ker(\varGamma - 1) \oplus  \ker(\varGamma + 1)
= \ell^2(\Z) \oplus \ell^2(\Z),
\]
where $\ker(\varGamma \mp 1) \neq \ell^2(\Z).$ On one hand, the imaginary part $Q$ of $U$ admits an off-diagonal block operator matrix representation with respect to the former decomposition as in the second equality of \cref{equation: representation of R and Q}, where the Fredholm index of $Q_0 : \ker(\varGamma - 1) \to \ker(\varGamma + 1)$ is by definition $\ind(\varGamma, U).$ On the other hand, the same operator $Q$ can \textit{not} be expressed as an off-diagonal block-operator matrix with respect to the latter decomposition. The unitary invariance of the Witten index (\cref{lemma: unitary invariance of the witten index}) motivates us to construct a unitary operator $\epsilon : \ell^2(\Z) \to \ell^2(\Z),$ in such a way that the imaginary part $\epsilon^* Q \epsilon$ of the new chiral pair $(\epsilon^* \varGamma \epsilon, \epsilon^* U \epsilon)$ become off-diagonal with respect to $\ell^2(\Z) \oplus \ell^2(\Z).$

\begin{lemma} 
\label{lemma: wada decomposition}
Let $R,Q$ be the real and imaginary parts of $U$ respectively. For each $x \in \Z,$ let $\theta(x)$ be any real number satisfying $q(x) = |q(x)|e^{i \theta(x)},$ and let $p_\pm(x) := \sqrt{1 \pm p(x)}.$ Let
\begin{align}
\label{equation1: definition of Qepsilon}
-2i Q_{\epsilon_0} &:= p_+ e^{i \theta}L^{m} \beta p_+ - p_- \beta^*  L^{-m} e^{-i \theta}p_- - |q|(\alpha_1 - \alpha_2(\cdot + m)), \\
\label{equation1: definition of Repsilon}
2R_{\epsilon_1} &:= p_- e^{i \theta}L^{m} \beta p_+ + p_+ \beta^* L^{-m} e^{-i \theta} p_- + p_+^2 \alpha_1 + p_-^2\alpha_2(\cdot + m), \\
\label{equation2: definition of Repsilon}
2R_{\epsilon_2} &:= p_+ e^{i \theta} L^{m} \beta p_- + p_- \beta^* L^{-m} e^{-i \theta} p_+ - p_-^2\alpha_1 - p_+^2 \alpha_2(\cdot + m).
\end{align}
Then there exists a unitary operator $\epsilon$ on $\ell^2(\Z,\C^2),$ such that the following block-operator matrix representations hold true with respect to $\ell^2(\Z,\C^2) = \ell^2(\Z) \oplus \ell^2(\Z):$
\begin{align*}
&\epsilon^* \varGamma  \epsilon 
= 
\begin{pmatrix}
1 & 0 \\
0 & -1
\end{pmatrix}, 
&&\epsilon^* U \epsilon 
= 
\begin{pmatrix}
R_{\epsilon_1} & iQ_{\epsilon_0}^* \\
iQ_{\epsilon_0} & R_{\epsilon_2}
\end{pmatrix}, 
&&\epsilon^* R \epsilon 
=  
\begin{pmatrix}
R_{\epsilon_1} & 0 \\
0 & R_{\epsilon_2}
\end{pmatrix}, 
&&\epsilon^* Q \epsilon 
=  
\begin{pmatrix}
0 & Q_{\epsilon_0}^* \\
Q_{\epsilon_0} & 0
\end{pmatrix},
\end{align*}
Moreover, the chiral pair $(\varGamma, U)$ is Fredholm if and only if $Q_{\epsilon_0}$ is Fredholm. In this case, 
\begin{equation}
\label{equation: first index formula}
\ind(\varGamma, U) = \ind Q_{\epsilon_0}.
\end{equation}
\end{lemma}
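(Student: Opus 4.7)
The plan is to construct the unitary $\epsilon$ explicitly and then to read off all of the required block decompositions from a direct computation of $\epsilon^* U \epsilon$. The off-diagonal shifts $L^{\pm m}$ in $\varGamma$ prevent a pure fibre-wise diagonalisation, so I first introduce the block-diagonal unitary $W := \mathrm{diag}(1, L^{-m})$ on $\ell^2(\Z) \oplus \ell^2(\Z)$ to unwind them; the commutation rule $L^m f L^{-m} = f(\cdot + m)$ gives
\[
W^* \varGamma W = \begin{pmatrix} p & q \\ q^* & -p \end{pmatrix},
\]
which is a direct integral over $x \in \Z$ of $2 \times 2$ self-adjoint unitary involutions on $\C^2$. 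The identity $|q|^2 = (1-p)(1+p) = p_-^2 p_+^2$ lets one diagonalise each fibre by the rotation
\[
V(x) := \frac{1}{\sqrt 2}\begin{pmatrix} p_+(x) & -p_-(x)\, e^{i\theta(x)} \\ p_-(x)\, e^{-i\theta(x)} & p_+(x) \end{pmatrix},
\]
and I set $\epsilon := W V$, with $V$ understood as the corresponding multiplication operator on $\ell^2(\Z, \C^2)$. By construction
\[
\epsilon^* \varGamma \epsilon = V^* (W^* \varGamma W) V = \begin{pmatrix} 1 & 0 \\ 0 & -1 \end{pmatrix}.
\]

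Next, since $U = \varGamma C$ with $C$ self-adjoint (so that the chirality $U^* = \varGamma U \varGamma$ is automatic, as already noted after \cref{definition: Um}), we obtain
\[
\epsilon^* U \epsilon = (\epsilon^* \varGamma \epsilon)(\epsilon^* C \epsilon) = \begin{pmatrix} 1 & 0 \\ 0 & -1 \end{pmatrix} (\epsilon^* C \epsilon).
\]
The heart of the proof is thus the direct expansion of the self-adjoint block-operator matrix $\epsilon^* C \epsilon$: each of its four entries is a sum of four summands that one simplifies using $L^{\pm m} f L^{\mp m} = f(\cdot \pm m)$ together with the fact that $p_\pm, e^{\pm i\theta}, \alpha_j, \beta$ are pairwise commuting multiplication operators, and exploiting the cancellations $e^{i\theta} e^{-i\theta} = 1$ and $p_+ p_- = |q|$. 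Collecting terms and performing the sign flip on the second row induced by left multiplication by $\mathrm{diag}(1,-1)$ yields the asserted
\[
\epsilon^* U \epsilon = \begin{pmatrix} R_{\epsilon_1} & iQ_{\epsilon_0}^* \\ iQ_{\epsilon_0} & R_{\epsilon_2} \end{pmatrix}
\]
with $R_{\epsilon_1}, R_{\epsilon_2}, Q_{\epsilon_0}$ as in \cref{equation1: definition of Qepsilon,equation1: definition of Repsilon,equation2: definition of Repsilon}. Taking self-adjoint and skew-adjoint parts then gives the claimed representations of $\epsilon^* R \epsilon$ and $\epsilon^* Q \epsilon$.

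Finally, the conclusion $\ind(\varGamma, U) = \ind Q_{\epsilon_0}$ together with the equivalence of the two Fredholm properties is immediate from the unitary invariance of the Witten index (\cref{lemma: unitary invariance of the witten index}) applied to $(\varGamma, U) \simeq (\epsilon^* \varGamma \epsilon, \epsilon^* U \epsilon)$, whose imaginary part is off-diagonal with $(2,1)$-block equal to $Q_{\epsilon_0}$. The principal obstacle is the bookkeeping in the middle step: sixteen summands arise across the four entries of $\epsilon^* C \epsilon$, and each must be chased carefully through the shift commutations so that the $e^{\pm i\theta}$ factors land in the advertised positions and the diagonal simplifications $p_+ p_- = |q|$ collapse the expressions to the compact forms stated.
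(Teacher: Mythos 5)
Your strategy coincides with the paper's own proof: unwind the shifts in $\varGamma$ with $W=\mathrm{diag}(1,L^{-m})$, diagonalise the resulting fibrewise involution $\left(\begin{smallmatrix}p&q\\q^*&-p\end{smallmatrix}\right)$, set $\epsilon=WV$, expand $\epsilon^*C\epsilon$ using $L^{\pm m}fL^{\mp m}=f(\cdot\pm m)$ and $p_+p_-=|q|$, and finish with \cref{lemma: unitary invariance of the witten index}. (Your one-sentence appeal to unitary invariance at the end is adequate; the paper's explicit identification of $\ker(\epsilon^*\varGamma\epsilon\mp1)$ with the two summands of $\ell^2(\Z)\oplus\ell^2(\Z)$ is automatic here.)

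The one concrete problem is your choice of $V$. Your $V(x)$ equals the paper's diagonaliser
\[
\epsilon_0(x)=\frac{1}{\sqrt2}
\begin{pmatrix}1&0\\0&e^{-i\theta(x)}\end{pmatrix}
\begin{pmatrix}p_+(x)&-p_-(x)\\p_-(x)&p_+(x)\end{pmatrix}
\]
multiplied on the right by $\mathrm{diag}(1,e^{i\theta(x)})$. This still satisfies $V^*\left(\begin{smallmatrix}p&q\\q^*&-p\end{smallmatrix}\right)V=\mathrm{diag}(1,-1)$, so the claim about $\epsilon^*\varGamma\epsilon$ is unaffected, but it does \emph{not} reproduce the stated blocks: carrying out your own expansion with your $V$, the $(2,1)$-entry of $\epsilon^*U\epsilon$ comes out as $ie^{-i\theta}Q_{\epsilon_0}$ and the $(2,2)$-entry as $e^{-i\theta}R_{\epsilon_2}e^{i\theta}$, which differ from \cref{equation1: definition of Qepsilon} and \cref{equation2: definition of Repsilon} whenever $\theta$ is non-constant, since $e^{\pm i\theta}$ does not commute with $L^{\pm m}$. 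Replacing your $V$ by $\epsilon_0$ above repairs this and yields \crefrange{equation1: definition of Qepsilon}{equation2: definition of Repsilon} verbatim. The slip is immaterial for the final assertion of the lemma — multiplication by the unitary $e^{-i\theta}$ changes neither Fredholmness nor the index, and the paper itself exploits exactly this freedom immediately after the lemma when passing to $e^{-i\theta_+}Q_{\epsilon_0}e^{i\theta_-}$ — but as written your computation cannot literally ``yield the asserted'' formulas, so either correct $V$ or weaken the intermediate claim accordingly.
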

As we shall see below, the derivation of the index formula \cref{equation: first index formula} only requires the boundedness of the given sequences $\gamma, p, a, q, b,$ and so \cref{equation: limits of sequences} turns out to be redundant. Note, however, that this assumption \cref{equation: limits of sequences}  is necessary to prove the index formula \cref{equation: Witten index formula}.
\begin{proof}
Note first that $\varGamma$ can be written as
\[
\varGamma
=
\begin{pmatrix}
p & qL^{m} \\
L^{-m}q^*  & -p(\cdot - m)
\end{pmatrix}
= 
\begin{pmatrix}
1 & 0\\
0  & L^{-m}
\end{pmatrix}
\begin{pmatrix}
p   & q \\
q^* & -p
\end{pmatrix}
\begin{pmatrix}
1 & 0\\
0  & L^{m}
\end{pmatrix},
\]
where the middle matrix on the right hand side of the second equality admits the following diagonalisation. For each $x \in \Z$ we have
\begin{equation}
\label{equation: diagonalisation of unitary involutory matrices}
\epsilon_0(x)^*
\begin{pmatrix}
p(x) & q(x) \\
q(x)^*  & -p(x)
\end{pmatrix}
\epsilon_0(x)
=
\begin{pmatrix}
1 & 0 \\
0 & -1
\end{pmatrix}, \quad 
\epsilon_0(x)
:=
\frac{1}{\sqrt{2}}
\begin{pmatrix}
1 & 0 \\
0 & e^{-i \theta(x)}
\end{pmatrix}
\begin{pmatrix}
p_+(x) & -p_-(x) \\
p_-(x)   & p_+(x) 
\end{pmatrix}.
\end{equation}
Since $\epsilon_0 := \bigoplus_{x \in \Z} \epsilon_0(x)$ is unitary, the following operator is also unitary;
\[
\epsilon :=
\begin{pmatrix}
1 & 0 \\
0  & L^{-m}
\end{pmatrix}
\epsilon_0 
=
\frac{1}{\sqrt{2}}
\begin{pmatrix}
1 & 0 \\
0  & L^{-m}e^{-i \theta}
\end{pmatrix}
\begin{pmatrix}
p_+  & -p_- \\
p_-  & p_+
\end{pmatrix}.
\]
It follows from the first equality that
\[
\epsilon^* \varGamma  \epsilon 
= 
\epsilon_0^*
\begin{pmatrix}
1 & 0 \\
0  & L^{m}
\end{pmatrix}
\begin{pmatrix}
1 & 0 \\
0  & L^{-m}
\end{pmatrix}
\begin{pmatrix}
p   & q \\
q^* & -p
\end{pmatrix}
\begin{pmatrix}
1 & 0 \\
0  & L^{m}
\end{pmatrix}
\begin{pmatrix}
1 & 0 \\
0  & L^{-m}
\end{pmatrix}
\epsilon_0 \\
=
\epsilon_0^*
\begin{pmatrix}
p & q \\
q^*  & -p
\end{pmatrix}
\epsilon_0 = 
\begin{pmatrix}
1 & 0 \\
0 & -1
\end{pmatrix},
\]
where the last equality follows from \cref{equation: diagonalisation of unitary involutory matrices}.

Given a bounded operator $X$ on $\ell^2(\Z,\C^2),$ we introduce the shorthand $X_\epsilon := \epsilon^* X \epsilon.$ With this convention in mind, we have $[\varGamma _\epsilon, R_\epsilon] = 0 = \{\varGamma _\epsilon, Q_\epsilon\},$  where $\varGamma _\epsilon = 1 \oplus (-1)$ with respect to $\ell^2(\Z,\C^2) = \ell^2(\Z) \oplus  \ell^2(\Z).$ It follows that we have the following representations:
\begin{align}
\label{equation: epsilon representation}
R_\epsilon 
&=  
\begin{pmatrix}
R'_{\epsilon_1} & 0 \\
0 & R'_{\epsilon_2}
\end{pmatrix}, 
&
Q_\epsilon 
&=  
\begin{pmatrix}
0 & (Q'_{\epsilon_0})^* \\
Q'_{\epsilon_0} & 0
\end{pmatrix},
&
U_\epsilon 
&= R_\epsilon + iQ_\epsilon = 
\begin{pmatrix}
R'_{\epsilon_1} & i(Q'_{\epsilon_0})^* \\
iQ'_{\epsilon_0} & R'_{\epsilon_2}
\end{pmatrix}.
\end{align}
It remains to show that the three operators $Q'_{\epsilon_0},R'_{\epsilon_1},R'_{\epsilon_2}$ introduced above coincide with the ones defined by the formulas \crefrange{equation1: definition of Qepsilon}{equation2: definition of Repsilon}. Note that
\begin{align}
\label{equation1: Cepsilon}
2C_\epsilon 
&= 
\varGamma _\epsilon (2U_\epsilon)
=
\begin{pmatrix}
1 & 0 \\
0 & -1
\end{pmatrix}
\begin{pmatrix}
2R'_{\epsilon_1} & 2i(Q'_{\epsilon_0})^* \\
2iQ'_{\epsilon_0}   & 2R'_{\epsilon_2}  \\
\end{pmatrix}
=
\begin{pmatrix}
2R'_{\epsilon_1}     & 2i(Q'_{\epsilon_0})^* \\
-2iQ'_{\epsilon_0}   & -2R'_{\epsilon_2}  \\
\end{pmatrix}.
\end{align}
It remains to compute $2C_\epsilon.$ We have
\begin{align*}
2\epsilon^*
\begin{pmatrix}
\alpha_1 & 0 \\
0 & \alpha_2 \\
\end{pmatrix}
\epsilon
&= 
\begin{pmatrix}
p_+^2 \alpha_1 + p_-^2 \alpha_2(\cdot + m) & -|q|(\alpha_1 - \alpha_2(\cdot + m)) \\
-|q|(\alpha_1 - \alpha_2(\cdot + m))  & p_-^2 \alpha_1 + p_+^2\alpha_2(\cdot + m)
\end{pmatrix}, \\
2\epsilon^*
\begin{pmatrix}
0 & \beta^* \\
\beta & 0 \\
\end{pmatrix}
\epsilon
&= 
\begin{pmatrix}
p_- e^{i \theta} L^{m} \beta p_+ + p_+ \beta^* L^{-m}e^{-i \theta} p_-    & -p_- e^{i \theta}L^{m} \beta p_- + p_+ \beta^* L^{-m}e^{-i \theta} p_+   \\
p_+ e^{i \theta}L^{m} \beta p_+ - p_- \beta^* L^{-m}e^{-i \theta} p_-  &
-p_+ e^{i \theta}L^{m} \beta p_- - p_- \beta^* L^{-m} e^{-i \theta}p_+ 
\end{pmatrix}.
\end{align*}
It follows from the above two equalities that
\begin{equation}
\label{equation2: Cepsilon}
2C_\epsilon = 
2\epsilon^*
\begin{pmatrix}
\alpha_1 & 0 \\
0 & \alpha_2 \\
\end{pmatrix}
\epsilon +
2\epsilon^*
\begin{pmatrix}
0 & \beta^* \\
\beta & 0 \\
\end{pmatrix}
\epsilon
=
\begin{pmatrix}
2R_{\epsilon_1} & 2i Q_{\epsilon_0}^* \\
-2i Q_{\epsilon_0} & -2R_{\epsilon_2}
\end{pmatrix}
\end{equation}
By comparing \cref{equation1: Cepsilon} with \cref{equation2: Cepsilon}, we see that \cref{equation: epsilon representation} also holds true without the superscript $'$.

Note that $\ell^2(\Z,\C^2) = \ell^2(\Z) \oplus \ell^2(\Z)$ can be identified with the orthogonal sum $\ell^2(\Z) \oplus \{0\} \oplus \{0\} \oplus \ell^2(\Z)$ through the following unitary transform;
\[
\ell^2(\Z,\C^2) \ni (\Psi_1, \Psi_2) \longmapsto (\Psi_1, 0,0,\Psi_2) \in \ell^2(\Z) \oplus \{0\} \oplus \{0\} \oplus \ell^2(\Z).
\]
It is then easy to show that the operator $Q_\epsilon$ admits the following block-operator matrix representations:
\begin{equation}
\label{equation1: representation of Qepsilon}
Q_\epsilon
=
\begin{pmatrix}
0 & Q_{\epsilon_0}^* \\
Q_{\epsilon_0} & 0
\end{pmatrix}_{\ell^2(\Z) \oplus \ell^2(\Z)}
=
\begingroup
\setlength\arraycolsep{5pt}
\begin{pmatrix}
0 & 0 & 0 & Q_{\epsilon_0} \\
0 & 0 & \textbf{0} & 0 \\
0 & \textbf{0} & 0 & 0 \\
Q_{\epsilon_0} & 0 & 0 & 0
\end{pmatrix}_{\ell^2(\Z) \oplus \{0\} \oplus \{0\} \oplus \ell^2(\Z)}
\endgroup,
\end{equation}
where $\textbf{0}$ denotes the zero operator of the form $\textbf{0} : \{0\} \to \{0\},$ and where $\ell^2(\Z) \oplus \{0\} = \ker(\varGamma _\epsilon - 1)$ and $\{0\} \oplus \ell^2(\Z) = \ker(\varGamma _\epsilon + 1).$ On the other hand, the imaginary part $Q_\epsilon$ associated with $(\varGamma _\epsilon, U_\epsilon)$ admits the following off-diagonal block-operator matrix representation with respect to $\ell^2(\Z,\C^2) = \ker(\varGamma _\epsilon - 1) \oplus \ker(\varGamma _\epsilon + 1)$ as in \cref{equation: representation of R and Q};
\begin{equation}
\label{equation2: representation of Qepsilon}
Q = 
\begin{pmatrix}
0 & (Q''_{\epsilon_0})^* \\
Q''_{\epsilon_0} & 0
\end{pmatrix}_{\ker(\varGamma _\epsilon - 1) \oplus \ker(\varGamma _\epsilon + 1)}
=
\begin{pmatrix}
0 & (Q''_{\epsilon_0})^* \\
Q''_{\epsilon_0} & 0
\end{pmatrix}_{(\ell^2(\Z) \oplus \{0\}) \oplus (\{0\} \oplus \ell^2(\Z))}.
\end{equation}
It follows from \crefrange{equation1: representation of Qepsilon}{equation2: representation of Qepsilon} that $Q''_{\epsilon_0}$ is an off-diagonal block-operator matrix of the form;
\[
Q_{\epsilon_0} = 
\begingroup
\setlength\arraycolsep{4pt}
\begin{pmatrix}
0  & \textbf{0} \\
Q''_{\epsilon_0} & 0
\end{pmatrix}.
\endgroup 
\]
Since $\textbf{0}$ is a Fredholm operator of zero index, we have that $Q''_{\epsilon_0}$ is Fredholm if and only if $Q_{\epsilon_0}$ is Fredholm. In this case, we have $\ind Q''_{\epsilon_0} = \ind Q_{\epsilon_0} + \ind \textbf{0} = \ind Q_{\epsilon_0} + 0 = \ind Q_{\epsilon_0}.$ The claim follows from \cref{lemma: unitary invariance of the witten index}.
\end{proof}

It remains to compute the Fredholm index of the strictly local operator $Q_{\epsilon_0}$ given by \cref{equation1: definition of Qepsilon}, where $\theta = (\theta(x))_{x \in \Z}$ can be \textit{any} $\R$-valued sequence satisfying $q(x) = |q(x)|e^{i \theta(x)}$ for each $x \in \Z.$ Note that \cref{theorem: topological invariants of strictly local operators}~(i) is not immediately applicable to this operator $Q_{\epsilon_0},$ since it is not necessarily true that $\theta$ is convergent. More precisely, for each $\star = \pm \infty,$ if $q(\star) \neq 0,$ then we can explicitly construct $\theta$ in such a way that $\theta(\star) = \lim_{x \to \star} \theta(x)$ holds true. On the other hand, if $q(\star) =  0,$ then the same conclusion cannot be drawn in general. In order to overcome this hindrance, we shall closely follow \cite[Lemma 3.4]{Tanaka-2020};

\begin{lemma}
There exist two $\R$-valued sequences $\theta_+ = (\theta_+(x))_{x \in \Z}, \theta_- = (\theta_-(x))_{x \in \Z},$ such that  
\begin{equation}
\label{equation: modified Qepsilon}
\begin{aligned}
e^{-i \theta_+}(-2i Q_{\epsilon_0})e^{i \theta_-}
= \quad &p_+ p_+(\cdot + m) \beta(\cdot + m)e^{i(\theta - \theta_+ + \theta_-(\cdot + m))}L^{m}  \\
- &p_- p_-(\cdot - m) \beta^* e^{-i (\theta(\cdot - m) -\theta_-(\cdot - m) + \theta_+)}L^{-m} \\
- &|q|(\alpha_1 - \alpha_2(\cdot + m))e^{i(\theta_- - \theta_+)},
\end{aligned}
\end{equation} 
where the three coefficients of the above strictly local operator have the following limits for each $\star = \pm \infty:$
\begin{align}
\label{equation1: new phase}
&\lim_{x \to \star} \left( p_+(x) p_+(x + m) \beta(x + m)  e^{i (\theta(x) - \theta_+(x) + \theta_-(x + m))} \right) = 
(1 + p(\star)) b(\star)e^{i \theta(\star)}, \\
\label{equation2: new phase}
&\lim_{x \to \star} \left(p_-(x) p_-(x - m) \beta(x)^* e^{-i (\theta(x - m) - \theta_-(x - m)  + \theta_+(x))}\right) = 
(1  - p(\star)) b(\star)^* e^{-i \theta(\star)}, \\
\label{equation3: new phase}
&\lim_{x \to \star} 
\left(|q(x)|(\alpha_1(x) - \alpha_2(x + m))e^{i (\theta_-(x) - \theta_+(x))}\right) 
= 2|q(\star)| a(\star) \cosh(2 \gamma(\star)).
\end{align}
\end{lemma}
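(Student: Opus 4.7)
The identity \cref{equation: modified Qepsilon} is an elementary consequence of the commutation relations $L^{m} \cdot g = g(\cdot + m) \cdot L^{m}$ and $L^{-m} \cdot g = g(\cdot - m) \cdot L^{-m}$ between the shift operator and an arbitrary bounded multiplication operator $g.$ Once $\theta_\pm$ have been specified, a direct rearrangement of $e^{-i\theta_+}(-2iQ_{\epsilon_0})e^{i\theta_-}$ yields the formula. The substance of the lemma is therefore the construction of $\theta_+, \theta_-$ for which the three asymptotic identities \crefrange{equation1: new phase}{equation3: new phase} hold.

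My plan is to define $\theta_\pm$ in a piecewise manner depending only on the asymptotic data at each $\star = \pm\infty.$ Observe that the three non-phase prefactors on the left-hand sides tend respectively to $(1+p(\star))b(\star),$ $(1-p(\star))b(\star)^*,$ and $2|q(\star)|a(\star)\cosh(2\gamma(\star)).$ Since $p(\star)^2+|q(\star)|^2 = 1,$ whenever $q(\star) = 0$ one has $p(\star) = \pm 1,$ so exactly one of $p_+(\star), p_-(\star)$ vanishes along with the third prefactor. I exploit this by prescribing, for each $\star = \pm\infty$ and for $x$ eventually in the $\star$-tail:
\begin{itemize}
\item $\theta_+(x) := \theta(\star),\ \theta_-(x) := \theta(\star)$ if $q(\star) \neq 0;$
\item $\theta_+(x) := \theta(x),\ \theta_-(x) := 0$ if $q(\star) = 0$ and $p(\star) = 1;$
\item $\theta_+(x) := 0,\ \theta_-(x) := \theta(x)$ if $q(\star) = 0$ and $p(\star) = -1,$
\end{itemize}
adopting the convention $\theta(x) := 0$ whenever $q(x) = 0.$ On the bounded middle region the values of $\theta_\pm$ are immaterial for the two-sided limits and may be set to $0.$ Since the prescriptions at $-\infty$ and $+\infty$ govern disjoint tails, the piecewise definition is consistent.

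The three limits are then verified by a direct case-by-case check at each $\star.$ In the non-degenerate case $q(\star) \neq 0,$ the map $\arg$ is continuous at $q(\star),$ so $\theta(x) \to \theta(\star);$ substituting $\theta_\pm(y) = \theta(\star)$ into each phase makes them tend to $\theta(\star),$ $-\theta(\star),$ and $0$ respectively, and the three identities follow by routine multiplication. In the degenerate case $q(\star) = 0$ with $p(\star) = 1,$ the prescription reduces the first phase to $\theta(x) - \theta(x) + 0 = 0 = \theta(\star),$ while prefactors $2$ and $3$ vanish in the limit so that identities $2$ and $3$ hold trivially; the case $p(\star) = -1$ is symmetric.

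The main obstacle this piecewise prescription is designed to overcome is that when $q(\star) = 0$ the sequence $\theta(x) = \arg q(x)$ can fail to converge as $x \to \star,$ since $q(x)$ may spiral into zero. The construction circumvents this by aligning $\theta_+(x)$ or $\theta_-(x)$ pointwise with $\theta(x)$ precisely in the tail where the corresponding prefactor is non-degenerate, so that $\theta(x)$ cancels exactly from the sole non-trivial phase at that $\star.$ Once this observation is in place, the remainder of the proof is an algebraic check.
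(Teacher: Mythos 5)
Your proposal is correct and follows essentially the same route as the paper: the paper also defines $\theta_\pm(x)$ piecewise on the two half-lines, setting $\theta_\pm(x) := \theta(x)$ precisely when $p(\star(x)) = \pm 1$ and $0$ otherwise, so that the non-convergent phase $\theta(x)$ cancels from the sole non-degenerate term in each tail while the remaining terms vanish via their prefactors. The only cosmetic difference is your choice of the constant $\theta(\star)$ instead of $0$ in the non-degenerate tail, and note that one only needs $e^{i\theta(x)} = q(x)/|q(x)| \to e^{i\theta(\star)}$ rather than convergence of $\theta(x)$ itself, which is all the stated limits require.
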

\begin{proof}
For each $x \in \Z$ we let 
\[
\star(x) := 
\begin{cases}
+\infty, & x \geq 0, \\
-\infty, & x < 0,
\end{cases}
\qquad 
\theta_{\pm}(x) := 
\begin{cases}
\theta(x), & p(\star(x)) =    \pm 1, \\
0,         & p(\star(x)) \neq \pm 1.
\end{cases}
\]
Note that \cref{equation: modified Qepsilon} immediately follows from \cref{equation1: definition of Qepsilon}. We let
\[
\Theta_1 := \theta - \theta_+ + \theta_-(\cdot + m), \quad 
\Theta_2 := \theta(\cdot - m) - \theta_-(\cdot - m)  + \theta_+, \quad 
\Theta_3 := \theta_-  - \theta_+.
\]
It suffices to prove the following equalities:
\begin{align}
\label{equation4: new phase}
&\lim_{x \to \star} \left(p_+(x) p_+(x + m) e^{i \Theta_1(x)} \right) = 
(1 + p(\star)) e^{i \theta(\star)}, \\
\label{equation5: new phase}
&\lim_{x \to \star} \left(p_-(x) p_-(x - m) e^{-i\Theta_2(x)}\right) = 
(1 - p(\star)) e^{-i \theta(\star)}, \\
\label{equation6: new phase}
&\lim_{x \to \star} 
\left(|q(x)|e^{i \Theta_3(x)}\right) = |q(\star)|.
\end{align}
Let $\star = \pm \infty,$ and let $x$ be any integer satisfying $|x| > |m|.$ If $|p(\star)| < 1,$ then $\theta_+(x) = \theta_-(x) = 0.$ In this case, \crefrange{equation4: new phase}{equation6: new phase} follow from the fact that as $x \to \star$ we have $\Theta_j(x) \to \theta(\star)$ for each $j = 1,2,$ and $\Theta_3(x) \to 0.$ On the other hand, if $|p(\star)| = 1,$ then $q(\star) = 0,$ and so \cref{equation6: new phase} becomes trivial. We need to check the following cases separately: $p(\star) = -1$ and $p(\star) = +1.$ If $p(\star) = -1,$ then \cref{equation4: new phase} holds trivially, and \cref{equation5: new phase} follows from $\theta_-(x - m) = \theta(x - m)$ and $\theta_+(x) = 0  = \theta(\star),$ where the last equality follows from \cref{equation: limits of theta and theta prime}.  Similarly, if $p(\star) = +1,$ then \cref{equation5: new phase} holds trivially, and \cref{equation4: new phase} follows from $\theta_+(x) = \theta(x)$ and $\theta_-(x+m) = 0  = \theta(\star).$
\end{proof}

Since the Fredholm index is invariant under multiplication by invertible operators, 
\[
\ind(e^{-i \theta_+} Q_{\epsilon_0} e^{i \theta_-}) = \ind Q_{\epsilon_0} = \ind(\varGamma, U).
\]
We are now in a position to apply \cref{theorem: topological invariants of strictly local operators}~(i) to $A := e^{-i \theta_+} Q_{\epsilon_0} e^{i \theta_-}.$ Since the two-sided limits of the coefficients of $-2iA_\epsilon$ are given respectively by \crefrange{equation1: new phase}{equation3: new phase}, we introduce the following notation according to \cref{equation: definition of Fourier transform of A};
\begin{align}
\label{equation2: definition of matsuzawa function} 
c(\star) &:= |q(\star)| a(\star) \cosh(2 \gamma(\star)), \\
\label{equation1: definition of matsuzawa function} 
-2if(z,\star) &:= 
(p(\star) + 1) b(\star) e^{i \theta(\star)} z^{m} + (p(\star) - 1) b(\star)^* e^{-i \theta(\star)} z^{-m}  -2c(\star),
\end{align}
where $\star = \pm \infty$ and $z \in \T.$ It follows from \cref{theorem: topological invariants of strictly local operators}~(i) that $A = e^{-i \theta_+} Q_{\epsilon_0} e^{i \theta_-}$ is Fredholm if and only if $f(\cdot,\star)$ is nowhere vanishing on $\T$ for each $\star = \pm \infty.$ In this case, we have 
\begin{equation}
\label{equation: witten index expressed as the difference of winding numbers}
\ind(\varGamma, U) = \ind A = \wn(f(\cdot, + \infty)) - \wn(f(\cdot, - \infty)),
\end{equation}
where the last equality is a special case of \cref{equation: bulk-edge correspondence}. It remains to compute the winding number of $f(\cdot, \star).$

\begin{lemma}
\label{lemma: matsuzawa function is an ellipse} 
Let $\varGamma ,C$ be as in \cref{maintheorem: generalised mko}, and let $\star = \pm \infty.$ Let $f(\cdot, \star)$ be defined by \crefrange{equation1: definition of matsuzawa function}{equation2: definition of matsuzawa function}, and let $p_\gamma(\star)$ be defined by \cref{equation: definition of pgamma}. Then the image of $\T \ni z \longmapsto f(z, \star) \in \C$ does not contain the origin if and only if $|p_\gamma(\star)| \neq |a(\star)|.$ In this case, we have
\begin{equation}
\label{equation1: winding number of matsuzawa function}
\wn(f(\cdot, \star)) = 
\begin{cases}
m \cdot \sgn p(\star), & |p_\gamma(\star)| > |a(\star)|, \\
0,                     & |p_\gamma(\star)| < |a(\star)|. \\
\end{cases}
\end{equation}
\end{lemma}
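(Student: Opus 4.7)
The plan is to reduce the winding number computation to elementary Euclidean geometry by producing a real parametrisation of the curve $\T \ni z \longmapsto f(z, \star)$. Writing $z = e^{it}$ and setting $\psi := \theta(\star) + mt$, the defining formula \cref{equation1: definition of matsuzawa function} collapses, after collecting the $e^{\pm i \psi}$-terms, to
\begin{equation*}
f(z, \star) = -|b(\star)| \sin \psi + i\bigl(p(\star)\, |b(\star)| \cos \psi - c(\star)\bigr).
\end{equation*}
Hence the image of $f(\cdot, \star)$ lies on the ellipse $E(\star) \subseteq \C$ centred at $-i c(\star)$ with horizontal semi-axis $|b(\star)|$ and vertical semi-axis $|p(\star) b(\star)|$; if $p(\star) = 0$ this degenerates to a horizontal line segment through $-i c(\star)$.

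Next I would translate the geometric position of the origin relative to $E(\star)$ into an inequality on $|p_\gamma(\star)|$. The origin lies strictly inside $E(\star)$ exactly when $c(\star)^2 < p(\star)^2 |b(\star)|^2$, and on $E(\star)$ exactly when equality holds. A short rearrangement using $p(\star)^2 + |q(\star)|^2 = 1$ and $a(\star)^2 + |b(\star)|^2 = 1$ together with the definitions of $c(\star)$ and $p_\gamma(\star)$ turns this into $|p_\gamma(\star)|^2 > a(\star)^2$ (resp.\ equality), which immediately yields the criterion for $f(\cdot, \star)$ to be nowhere vanishing on $\T$.

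To obtain the winding number I would read off the orientation and multiplicity of the parametrisation. As $t$ traverses $[0, 2\pi]$ anticlockwise, the angle $\psi$ sweeps an interval of signed length $2\pi m$, so $E(\star)$ is traced $|m|$ times. A direct inspection of the parametrisation at the four cardinal values $\psi \in \{0, \pi/2, \pi, 3\pi/2\}$ shows that each anticlockwise period of $\psi$ traces $E(\star)$ anticlockwise if $p(\star) > 0$ and clockwise if $p(\star) < 0$; combining these two sign contributions yields a signed traversal count of $m \cdot \sgn p(\star)$. This gives the winding number when the origin is interior to $E(\star)$, and $0$ when it is exterior.

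The remaining subtlety is the degenerate case $p(\star) = 0$, where $E(\star)$ is merely a line segment and the above orientation argument is vacuous. However, $p(\star) = 0$ forces $p_\gamma(\star) = 0$, so the hypothesis $|p_\gamma(\star)| \neq |a(\star)|$ forces $|a(\star)| > 0$, automatically placing us in the exterior case; a quick check shows that the segment then avoids the origin (its imaginary part $-c(\star)$ is nonzero), giving winding number $0$, in agreement with the formula. The main obstacle is therefore simply to keep track, without arithmetic mistakes, of the sign contributions coming from $p(\star)$, from $m$, and from the direction of traversal of $\T$.
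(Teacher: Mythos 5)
Your argument is correct and follows essentially the same route as the paper's own proof: both reduce $f(e^{it},\star)$ to an explicit ellipse parametrisation (the paper works with $-if$, placing the ellipse on the real axis, and encodes the orientation via the identity $-if(e^{it},\star)=\sgn p(\star)\cdot F(\sgn p(\star)(\theta(\star)+\theta'(\star)+mt))-c(\star)$, while you read the orientation off the cardinal points directly), locate the origin relative to the ellipse, and convert $|c(\star)|\lessgtr|p(\star)b(\star)|$ into $|a(\star)|\lessgtr|p_\gamma(\star)|$ using \cref{equation: condition of p and q} and \cref{equation: condition of a and b}. The only slips are cosmetic: your phase variable should be $\psi=\theta(\star)+\theta'(\star)+mt$, absorbing $\arg b(\star)$, for the displayed identity to hold literally (the reparametrisation changes neither the image nor the winding number), and the degenerate case $b(\star)=0$ with $p(\star)\neq 0$, where the curve is the constant $-ic(\star)$ rather than a segment, deserves the same one-line check as your $p(\star)=0$ case.
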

\begin{proof}
Let us first prove that the image of $\T \ni z \longmapsto f(z, \star) \in \C$ does not contain the origin if and only if $|p(\star) b(\star)| \neq |c(\star)|,$ and 
\begin{equation}
\label{equation2: winding number of matsuzawa function}
\wn(f(\cdot, \star)) = 
\begin{cases}
m \cdot \sgn p(\star), & |p(\star) b(\star)| > |c(\star)|, \\
0,                     & |p(\star) b(\star)| < |c(\star)|. \\
\end{cases}
\end{equation}
Let us consider the following function on $\R;$
\begin{align*}
2F(s) 
&:= (|p(\star)b(\star)| + |b(\star)|) e^{is} + (|p(\star)b(\star)| - |b(\star)|)e^{-i s} \\
&= 2 |p(\star)b(\star)| \cos s + i 2|b(\star)| \sin s, \qquad s \in \R.
\end{align*}
Since $p(\star) = \sgn p(\star)|p(\star)|$ and $b(\star) = e^{i \theta'(\star)}|b(\star)|,$ for each $t \in [0,2\pi]$ we have
\begin{align*}
-2if(e^{i t}, \star) + 2c(\star) 
&= (p(\star) + 1) b(\star) e^{i \theta(\star)} e^{i mt} + (p(\star) - 1) b(\star)^* e^{-i \theta(\star)} e^{-i mt} \\
&= \sgn p(\star) \cdot 2F(\sgn p(\star)(\theta(\star) + \theta'(\star) + mt)).
\end{align*}
It follows that $-if(e^{i t}, \star) = \sgn p(\star) \cdot F(\sgn p(\star)(\theta(\star) + \theta'(\star) + mt)) -c(\star)$ for each $t \in [0,2\pi],$ where the constant $-i$ does not play any significant role in this proof.  If $p(\star)b(\star) = 0,$ then the image of the function $[0,2\pi] \ni t \longmapsto -if(e^{i t},\star) \in \C$ coincides with that of the vertical line segment $[-1,1] \ni t \longmapsto -c(\star) + i t|b(\star)| \in \C$ passing through $-c(\star).$ That is, the image of $f(\cdot, \star)$ does not contain the origin if and only if $|c(\star)| \neq 0 = |p(\star)b(\star)|,$ and in this case $\wn(f(\cdot, \star)) = 0.$ This is a special case of \cref{equation2: winding number of matsuzawa function}.

On the other hand, if $p(\star)b(\star) \neq 0,$ then the image of the curve $[0,2\pi] \ni t \longmapsto -if(e^{i t},\star) \in \C$ is the ellipse in \cref{figure: ellipse} with $m \cdot \sgn p(\star)$ being its winding number with respect to the center $-c(\star)$ on the real axis;

\begin{figure}[H]
\centering
\begin{tikzpicture}
\begin{axis}[axis y line=none,ticks=none, xmin= -8, xmax=8, ymin=-2, ymax=2, legend pos = north west, axis lines=center, xlabel=$\Re$, xlabel style={anchor = west}
, width = \textwidth, height = 0.5\textwidth
]
	\addplot [domain=-2*pi:2*pi,samples=50, smooth]({3*cos(deg(x))},{sin(deg(x))}); 
	\addplot [mark=none,forget plot, dashed] coordinates {(3, -1.5) (3, 1.5)};
	\addplot [mark=none,forget plot, dashed] coordinates {(-3, -1.5) (-3, 1.5)};
	\addplot [mark=none,forget plot, dashed] coordinates {(0, -1.5) (0, 1.5)};
	\draw [fill] (-3,0) circle (1.5 pt) node [anchor = north east] {$-c(\star) - |p(\star)b(\star)|$};
	\draw [fill] (3,0) circle (1.5 pt) node [anchor = north west] {$-c(\star) + |p(\star)b(\star)|$};
	\draw [fill] (0,0) circle (1.5 pt) node [anchor = north west] {$-c(\star)$};
\end{axis}
\end{tikzpicture}
\caption{The above figure shows the image of the curve $[0,2\pi] \ni t \longmapsto -if(e^{i t},\star) \in \C.$}
\label{figure: ellipse}
\end{figure}
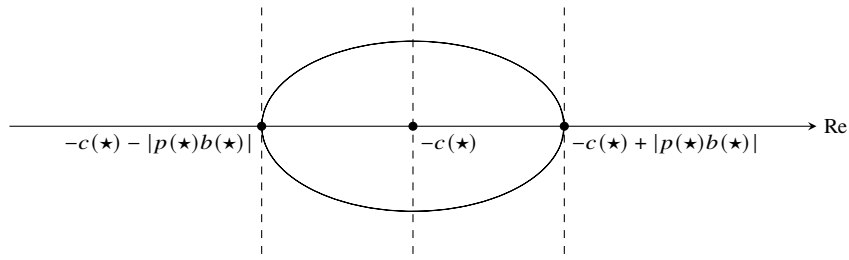

If $|p(\star)b(\star)| > |c(\star)|,$ then the origin is inside the interior of the ellipse, and so $\wn(f(\cdot,\star)) = \wn(-if(\cdot,\star)) =   \sgn p(\star).$ If $|p(\star)b(\star)| < |c(\star)|,$ then the origin is inside the exterior of the ellipse, and so $\wn(f(\cdot,\star)) = 0.$ Clearly, the ellipse $-if$ goes through the origin if and only if $|p(\star)b(\star)| = |c(\star)|.$

It remains to check that \cref{equation1: winding number of matsuzawa function} coincides with \cref{equation2: winding number of matsuzawa function}. If the notation $\lessgtr$ simultaneously denotes $>, =, <,$ then
$|p(\star) b(\star)| \lessgtr |c(\star)|$ if and only if $p(\star)^2 (1-a(\star)^2) \lessgtr |q(\star)|^2 a(\star)^2 \cosh^2(2 \gamma(\star))$ if and only if $p(\star)^2 \lessgtr a(\star)^2(p(\star)^2 + |q(\star)|^2\cosh^2(2 \gamma(\star)).$ Rearranging the last expression gives $|p_\gamma(\star)| \lessgtr |a(\star)|.$ The claim follows.
\end{proof}

\begin{proof}[Proof of \cref{maintheorem: generalised mko}~(i)]
The index formula \cref{equation: Witten index formula} immediately follows from \cref{equation: witten index expressed as the difference of winding numbers} and \cref{equation1: winding number of matsuzawa function}.
\end{proof}

It might be possible to give another proof for the index formula \cref{equation: Witten index formula} by making use of the recent developments of the scattering-theoretic techniques for discrete-time quantum walks \cite{Suzuki-2016,Richard-Suzuki-Tiedra-2017,Richard-Suzuki-Tiedra-2018,Maeda-Sasaki-Segawa-Suzuki-Suzuki-2018a,Morioka-2019,Wada-2020}. This possibility is briefly mentioned in \cite[\textsection 6]{Suzuki-Tanaka-2019}.

\subsubsection{Proof of \texorpdfstring{\cref{maintheorem: generalised mko}~(ii)}{Theorem C (ii)}}

\begin{proof}[Proof of \cref{maintheorem: generalised mko}~(ii)]
Note first that $U_m$ is a strictly local operator of the following form;
\[
U_m = 
\begin{pmatrix}
p e^{-2 \gamma(\cdot + 1)} a + q L^m e^{\gamma - \gamma(\cdot + 1)} b &   pe^{\gamma - \gamma(\cdot + 1)} b^* - q  L^m e^{2 \gamma} a \\
L^{-m} q^*  e^{-2 \gamma(\cdot + 1)} a - p(\cdot - m) e^{\gamma - \gamma(\cdot + 1)} b  &   L^{-m}q^* e^{\gamma - \gamma(\cdot + 1)} b^*  + p(\cdot - m)e^{2 \gamma}  a
\end{pmatrix}.
\]
It follows from \cref{theorem: topological invariants of strictly local operators}~(ii) that
\begin{align*}
&\ess(U_m) = \ess(U_m(- \infty)) \cup \ess(U_m(+ \infty)), \\
&\ess(U_m(\star)) = \bigcup_{z \in \T} \sigma\left(\hat{U}_m(z,\star)\right), \qquad \star = \pm \infty,
\end{align*}
where for each $\star = \pm \infty$ and each $z \in \T$ the $2 \times 2$ matrices $U_m(\star)$ and $\hat{U}_m(z,\star)$ are defined respectively by:
\begin{align*}
U_m(\star) &:= 
\begin{pmatrix}
q(\star) b(\star)L^m  + p(\star) a(\star) e^{-2 \gamma(\star)}   &  - (q(\star) a(\star) e^{2 \gamma(\star)} L^m -  p(\star) b(\star)^*)  \\
q(\star)^*a(\star) e^{-2 \gamma(\star)} L^{-m} - p(\star) b(\star)  &   q(\star)^* b(\star)^*L^{-m}  + p(\star) a(\star)e^{2 \gamma(\star)}
\end{pmatrix}, \\
\hat{U}_m(z,\star) 
&:= 
\begin{pmatrix}
q(\star) b(\star)z^m  + p(\star) a(\star) e^{-2 \gamma(\star)}   &  - (q(\star) a(\star) e^{2 \gamma(\star)} z^m -  p(\star) b(\star)^*)  \\
q(\star)^*a(\star) e^{-2 \gamma(\star)} z^{-m} - p(\star) b(\star)  &   q(\star)^* b(\star)^*z^{-m}  + p(\star) a(\star)e^{2 \gamma(\star)}
\end{pmatrix}.
\end{align*}
Let $\star = \pm \infty$ be fixed. It remains to compute $\sigma'(\star) := \bigcup_{t \in [0,2\pi]} \sigma\left(\hat{U}_m(e^{it},\star)\right).$ We let
\[
\hat{U}_m(e^{it},\star) =: 
\begin{pmatrix}
X_1(e^{it}) & -Y_1(e^{it}) \\
Y_2(e^{it}) & X_2(e^{it})
\end{pmatrix}, \qquad t \in [0,2\pi].
\]
We get the following characteristic equation;
\begin{equation}
\label{equation1: characteristic equation}
\det(\hat{U}_m(e^{it},\star) - \lambda) = \lambda^2 -(X_1(e^{it}) + X_2(e^{it}))\lambda + X_1(e^{it})X_2(e^{it}) + Y_1(e^{it})Y_2(e^{it}).
\end{equation}
Since the produce $q(\star) b(\star)$ can be written as $q(\star) b(\star) = |q(\star) b(\star)|e^{i (\theta(\star) + \theta'(\star))}$ by \cref{equation: limits of theta and theta prime}, we obtain the following two equalities:
\begin{align*}
&X_1(e^{it}) + X_2(e^{it})
= 2|q(\star)  b(\star)| \cos(\theta(\star) + \theta'(\star) + mt) + 2 p(\star) a(\star) \cosh(2\gamma(\star)), \\
&X_1(e^{it})X_2(e^{it}) + Y_1(e^{it})Y_2(e^{it}) = 1.
\end{align*}
Then the characteristic equation \cref{equation1: characteristic equation} becomes
\begin{equation}
\label{equation2: characteristic equation}
\lambda^2 - 2(p(\star) a(\star) \cosh(2\gamma(\star)) + |q(\star)  b(\star)| \cos(\theta(\star) + \theta'(\star) + mt))\lambda + 1 = 0.
\end{equation}
This equation motivates us to introduce the following notation;
\begin{align*}
\Lambda(\star, s) &:= p(\star) a(\star)\cosh(2\gamma(\star))  + |q(\star) b(\star)| s, &&-1 \leq s \leq 1, \\
\lambda_{\pm}(\star,s) &:= \Lambda(\star,s) \pm \sqrt{\Lambda(\star,s)^2 - 1},  &&-1 \leq s \leq 1, 
\end{align*}
Indeed,  \cref{equation2: characteristic equation} becomes $\lambda^2 - 2 \Lambda(\star,\cos(\theta(\star) + \theta'(\star) + mt)) \lambda + 1 = 0$ with the above notation, and so $\sigma\left(\hat{U}_m(e^{it},\star)\right)$ is a finite set consisting only of $\lambda_{\pm}(\star,\cos(\theta(\star) + \theta'(\star) + mt))$ for each $t \in [0,2\pi].$ We have
\[
\sigma'(\star) 
= \bigcup_{t \in [0,2\pi]} \sigma\left(\hat{U}_m(e^{it},\star)\right)
=  \bigcup_{s \in [-1,1]} \{\lambda_{\pm}(\star,s)\}
=  \bigcup_{s \in [-1,1]} \{\lambda_{+}(\star,s)^{\pm 1}\},
\]
where the second equality follows from the fact that $[0,2\pi] \ni t \longmapsto \cos(\theta(\star) + \theta'(\star) + mt) \in [-1,1]$ is surjective and the last equality follows from $\lambda_{+}(\star,t)\lambda_{-}(\star,t) = 1$ for each $t \in [0,2\pi].$ It follows that $\sigma'(\star)$ coincides with the set $\sigma(\star)$ given by \cref{equation: definition of sigma star}. Note first that $[\Lambda_-(\star),\Lambda_+(\star)] \subseteq [-1,\infty)$ follows from
\[
-1 \leq -|q(\star)b(\star)| \leq |a(\star)b(\star)|\cosh(2 \gamma(\star)) -|q(\star)b(\star)| = \Lambda_-(\star) \leq \Lambda_+(\star).
\]
If $p(\star)a(\star) = 0,$ then $\Lambda_+(\star) = |q(\star)b(\star)| \leq 1.$ This is a special case of Case I, since $\gamma_-(\star) = \gamma_+(\star) = \infty$ according to \cref{equation: definition of gammaj}. It remains to consider the case $p(\star)a(\star) \neq 0.$ We shall make use of the fact that the hyperbolic cosine is an even function throughout. It follows from \cref{equation: definition of gammaj} that
\begin{equation}
\label{equation: paqb}
|p(\star)a(\star)|\cosh(2 \gamma_\pm(\star)) = 1 \pm |q(\star)b(\star)|.
\end{equation}

\textnormal{\textbf{Case I. }} If $|\gamma(\star)| \leq \gamma_-(\star),$ then
\[
\Lambda_+(\star) 
\leq |p(\star) a(\star)|\cosh(2\gamma_-(\star))  + |q(\star) b(\star)| = 1,
\]
where the first inequality follows from $\cosh(2\gamma(\star)) \leq \cosh(2\gamma_-(\star))$ and the last equality follows from \cref{equation: paqb}. Thus $[\Lambda_-(\star),\Lambda_+(\star)] \subseteq  [-1,1].$ 

\textnormal{\textbf{Case II. }} If $\gamma_-(\star) < |\gamma(\star)| < \gamma_+(\star),$ then it follows from \cref{equation: paqb} that
$
\Lambda_-(\star) < 1 < \Lambda_+(\star).
$
It follows that the interval $[\Lambda_-(\star),\Lambda_+(\star)] \subseteq [-1,\infty)$ can be written as the following union;
\[
[\Lambda_-(\star),\Lambda_+(\star)] = [\Lambda_-(\star),1] \cup [1,\Lambda_+(\star)].
\]

\textnormal{\textbf{Case III. }} If $\gamma_+(\star) \leq |\gamma(\star)|.$ Then $[\Lambda_-(\star),\Lambda_+(\star)] \subseteq  [1,\infty)$ follows from 
\[
1 = |p(\star)a(\star)|\cosh(2 \gamma_+(\star)) - |q(\star)b(\star)| \leq \Lambda_-(\star),
\]
where the first equality follows from \cref{equation: paqb} and the last inequality follows from $\cosh(2\gamma_+(\star)) \leq \cosh(2\gamma(\star)).$ 
\end{proof}

In the setting of $2$-phase quantum walks, a typical computation of the essential spectrum makes use of the discrete Fourier transform and Weyl's criterion for the essential spectrum (see, for example, \cite[Lemma 3.3]{Fuda-Funakawa-Suzuki-2017}). Weyl's criterion is applicable to, for example, non-compact perturbations (see, for example, \cite{Sasaki-Suzuki-2017}), but its usage is restricted to normal operators. This is why Weyl's criterion is not suitable for \cref{maintheorem: generalised mko}~(ii).

\section{Conclusion}
\label{section: concluding remarks}

\subsection{Summary}

The following is a brief summary of the present article. A chiral pair on a Hilbert space $\cH$ is by definition any pair $(\varGamma,U)$ of a unitary self-adjoint operator $\varGamma : \cH \to \cH$ and a bounded operator $U : \cH \to \cH$ satisfying the chiral symmetry condition \cref{equation: chiral symmetry}. It is shown in \cref{section: preliminaries} that we can assign to each abstract chiral pair $(\varGamma, U)$ the well-defined Witten index, denoted by $\ind(\varGamma,U)$ in this paper. Note that this assignation of the Fredholm index is a natural generalisation of the existing index theory \cite{Cedzich-Grunbaum-Stahl-Velazquez-Werner-Werner-2016,Cedzich-Geib-Grunbaum-Stahl-Velazquez-Werner-Werner-2018,Cedzich-Geib-Stahl-Velazquez-Werner-Werner-2018,Suzuki-2019,Suzuki-Tanaka-2019,Matsuzawa-2020} for essentially unitary $U,$ where $\ind(\varGamma,U)$ is referred to as the \textbi{symmetry index} in the first three papers.

A motivating example for this paper is the non-unitary time-evolution $U_{\textnormal{mko}}$ defined by \cref{equation: definition of evolution operator of MKO}, where we assume the existence of the two-sided limits as in \cref{equation: existence of limits}. Recall that this evolution operator is consistent with the experimental setup in \cite{Regensburger-Bersch-Miri-Onishchukov-Christodoulides-Peschel-2012}. It is shown in \cref{theorem: MKO model} that the operator $U_{\textnormal{mko}}$ forms a chiral pair with respect to the unitary self-adjoint operator $\varGamma_{\textrm{mko}} := (\sigma_2 C_1 S \sigma_2) \varGamma_2 (\sigma_2 C_1 S \sigma_2)^*,$ where $\sigma_2$ denotes the second Pauli matrix, and that the chiral pair $(\varGamma_{\textnormal{mko}}, U_{\textnormal{mko}})$ can be naturally generalised to another chiral pair $(\varGamma_m, U_m),$ where $m$ can be any fixed integer. This new model $(\varGamma_m, U_m)$ also unifies several one-dimensional unitary quantum walk models as in \cref{section: chirally symmetric quantum walks}. Complete classification of the two associated topological invariants $\ind(\varGamma_m, U_m)$ and $\ess(U_m)$ can be collectively found in \cref{maintheorem: generalised mko}. Our classification of $\ind(\varGamma_m, U_m)$ makes use of an abstract form of the one-dimensional bulk-boundary correspondence, the precise statement of which can be found in \cref{theorem: topological invariants of strictly local operators}~(i).

Finally, it is shown in \cref{lemma: essential spectrum of essentially unitary U} that given an abstract chiral pair $(\varGamma,U)$ with $U$ being essentially unitary, we have that $\ind(\varGamma,U)$ is a well-defined Fredholm index if and only if $U$ is essentially gapped in the sense that $-1,+1 \notin \ess(U).$ It turns out that this characterisation does not hold true in general, if $U$ fails to be essentially unitary. To put this into context, we consider the non-unitary evolution $U_{\textnormal{mko}}.$ It is shown in \cref{example: second example} that we can choose the asymptotic values $\theta_1(\pm \infty), \theta_2(\pm \infty), \gamma(\pm \infty),$ in such a way that $U_{\textnormal{mko}}$ is essentially gapless, yet $\ind(\varGamma,U)$ is a well-defined non-zero integer.  

\subsection{Discussion}
The main results of the current paper may stimulate further developments in the rigorous mathematical studies of non-unitary discrete-time quantum walks. In particular, each of the following specific topics is the subject of another paper in preparation.

\subsubsection{Further spectral analysis of the Mochizuki-Kim-Obuse model}
Complete classification of $\ess(U_{\textnormal{mko}})$ is given in this paper. In particular, we show that $\ess(U_{\textnormal{mko}})$ is a subset of $\T \cup \R,$ and that it depends only on the asymptotic values $\theta_1(\pm \infty), \theta_2(\pm \infty), \gamma(\pm \infty).$ Note, however, that it is not known to the authors whether or not the entire spectrum of $U_{\textnormal{mko}}$ is also a subset of $\T \cup \R.$ Detailed spectral analysis of the evolution-operator $U_{\textnormal{mko}}$ may turn out to be difficult, partly because the discrete spectrum of such a non-normal operator is in general laborious to characterise (see, for example, \cite[\textsection III]{Boussaid-Comech-2019}). Note also that we expect the discrete spectrum to be non-stable under compact perturbations unlike $\ess(U_{\textnormal{mko}}).$ 

\subsubsection{Topologically protected bound states}

Let $(\varGamma,U)$ be a chiral pair. If $U$ is unitary, then the non-zero vectors in $\ker(U \mp 1)$ can be referred to as \textbi{topologically protected bound states} \cite{Kitagawa-Rudner-Berg-Demler-2010,Kitagawa-Broome-Fedrizzi-Rudner-Berg-Kassal-Aspuru-Demler-White-2012,Suzuki-2019,Suzuki-Tanaka-2019,Matsuzawa-2020}. It is well-known that the Witten index $\ind(\varGamma,U)$ gives a lower bound for the number of  topologically protected bound states in the following precise sense (see, for example, \cite[Theorem 3.4~(ii)]{Suzuki-2019});
\begin{equation}
\label{equation: topologically protected bound states}
|\ind(\varGamma,U)| \leq \dim \ker(U - 1) + \dim \ker(U + 1),
\end{equation}
where the chiral pair $(\varGamma,U)$ is assumed to be Fredholm. It follows that if $\ind(\varGamma,U)$ is non-zero, then $U$ has at least one topologically protected bound state. Whether or not an estimate analogous to \cref{equation: topologically protected bound states} holds true for non-unitary $U$ is an open problem.

\begin{acknowledgements}
The authors are deeply indebted to the members of the Shinshu Mathematical Physics Group for extremely valuable discussions and comments. Our sincerely thanks go to T.~Daniels for carefully reading the preprint. Y.~T.~was supported by JSPS KAKENHI Grant Number 20J22684. This work was partially supported by the Research Institute for Mathematical Sciences, a Joint Usage/Research Center located in Kyoto University.
\end{acknowledgements}

\bibliographystyle{alpha}
\bibliography{Bibliography}
\end{document}